\documentclass[11pt,a4paper]{article}
\usepackage{a4wide}
\usepackage[margin=1in]{geometry}
\usepackage[utf8]{inputenc}
\usepackage{enumerate}
\usepackage{amsmath}
\usepackage{amsfonts}
\usepackage{nicefrac}
\usepackage{amssymb}
\usepackage{amsthm}
\usepackage{mathtools}
\usepackage{dsfont}
\usepackage{tikz}
\usetikzlibrary{calc}
\usepackage{pgfplots}
\pgfplotsset{compat=newest}
\usepackage{pgfplotstable}
\usepackage{xcolor}
\usepackage{thm-restate}

\pgfplotsset{
    discard if not/.style 2 args={
        x filter/.code={
            \edef\tempa{\thisrow{#1}}
            \edef\tempb{#2}
            \ifx\tempa\tempb
            \else
                
            \fi
        }
    }
}
\pgfmathdeclarefunction{lg2}{1}{%
    \pgfmathparse{ln(#1)/ln(2)}%
}
\pgfmathdeclarefunction{lg10}{1}{%
    \pgfmathparse{ln(#1)/ln(10)}%
}

\usepackage{tabularx}
\usepackage{booktabs}
\usepackage{paralist}

\usepackage{algorithm2e}
\DontPrintSemicolon

\usepackage[pagebackref,pdfdisplaydoctitle,menucolor=orange!40!black,filecolor=magenta!40!black,urlcolor=blue!40!black,linkcolor=red!40!black,citecolor=green!40!black,colorlinks]{hyperref}

\renewcommand*{\backref}[1]{}
\renewcommand*{\backrefalt}[4]{%
	\ifcase #1%
	\or [p.~#2.]%
	\else [pp.~#2.]%
	\fi%
}
\usepackage[square,numbers]{natbib}
\usepackage[nameinlink,sort&compress,capitalize]{cleveref}

\usepackage[backgroundcolor=gray!10,textsize=footnotesize]{todonotes}

\newtheorem{question}{Open Question}
\newtheorem{theorem}{Theorem}
\newtheorem{lemma}{Lemma}

\newtheorem{corollary}{Corollary}

\newtheorem{claim}{Claim}
\theoremstyle{definition}
\newtheorem{definition}{Definition}

\newtheorem{conjecture}{Conjecture}

\crefname{rrule}{Rule}{Rules}

\newcommand{\defproblemu}[3]{
  \vspace{2mm}
\noindent\fbox{
  \begin{minipage}{0.963\columnwidth}
  \textsc{#1} \\
  {\bf{Input:}} #2  \\
  {\bf{Question:}} #3
  \end{minipage}
  }
  \vspace{2mm}
}

\DeclarePairedDelimiterX{\abs}[1]{\lvert}{\rvert}{#1}
\DeclarePairedDelimiterX{\norm}[1]{\lVert}{\rVert}{#1}
\DeclarePairedDelimiterX{\ceil}[1]{\lceil}{\rceil}{#1}
\DeclarePairedDelimiterX{\angled}[1]{\langle}{\rangle}{#1}

\newcommand{\NN}{\mathbb{N}}

\newcommand{\yes}{\emph{yes}}
\newcommand{\no}{\emph{no}}

\newcommand{\eps}{\varepsilon}

\DeclareMathOperator{\tw}{tw}

\DeclareMathOperator{\pw}{pw}

\DeclareMathOperator{\cc}{cc}

\DeclareMathOperator{\Bin}{B_{\mathrm{in}}}
\DeclareMathOperator{\Bout}{B_{\mathrm{out}}}
\DeclareMathOperator{\Brin}{B_{\mathrm{rin}}}
\DeclareMathOperator{\Brout}{B_{\mathrm{rout}}}
\DeclareMathOperator{\outdeg}{\mathrm{outdeg}}
\DeclareMathOperator{\indeg}{\mathrm{indeg}}

\DeclareMathOperator{\Vinl}{V_{\mathrm{in}}^{0}}
\DeclareMathOperator{\Vinr}{V_{\mathrm{in}}^{1}}

\DeclareMathOperator{\Voutl}{V_{\mathrm{out}}^{0}}
\DeclareMathOperator{\Voutr}{V_{\mathrm{out}}^{1}}
\DeclareMathOperator{\bfI}{\mathbf{I}}
\DeclareMathOperator{\bfO}{\mathbf{O}}
\DeclareMathOperator{\bfVin}{\mathbf{V}_{\mathrm{in}}}
\DeclareMathOperator{\bfVout}{\mathbf{V}_{\mathrm{out}}}
\DeclareMathOperator{\sI}{s_{\bfI}}
\DeclareMathOperator{\sO}{s_{\bfO}}
\DeclareMathOperator{\sVin}{s_{\bfVin}}
\DeclareMathOperator{\sVout}{s_{\bfVout}}
\DeclareMathOperator{\symI}{\cdot\mathbf{I}\cdot}
\DeclareMathOperator{\symO}{\cdot\mathbf{O}\cdot}
\newcommand{\NULL}{\mathbf{NULL}}
\newcommand{\treedecomp}{\mathbb{T}}

\title{New Parameterized and Exact Exponential Time Algorithms for Strongly Connected Steiner Subgraph\thanks{AJA,JN and SW are supported by the project COALESCE that has received funding from the European Research Council (ERC), grant agreement No 853234.}} 

\author{
  Afrouz Jabal Ameli\thanks{
    Department of Information and Computing Sciences, Utrecht University, The Netherlands.\\
    Email addresses:
    \texttt{a.jabalameli@uu.nl}, \texttt{j.nederlof@uu.nl}, \texttt{s.wang5@uu.nl}
  }
  \and
  Tomohiro Koana\thanks{
    Department of Mathematical Informatics, The University of Tokyo, Japan.\\ Email addresses: \texttt{tomohiro.koana@gmail.com}
  }
  \and
  Jesper Nederlof\footnotemark[2]
  \and
  Shengzhe Wang\footnotemark[2]
}

\date{}

\begin{document}
\maketitle

\begin{abstract}
    The \emph{Strongly Connected Steiner Subgraph} (SCSS) problem is a well-studied network design problem that asks for a minimum subgraph that strongly connects a given set of terminals. In this paper, we present several new algorithmic and complexity results for SCSS.

    As our main result, we show that SCSS can be solved in time $17^{\tw} n^{O(1)}$ on directed graphs with $n$ vertices when a tree decomposition of the underlying graph of width $\tw$ is provided. This improves over a natural $\tw^{O(\tw)}n^{O(1)}$ time algorithm, and is the first algorithm with this kind of running time for a problem involving strong connectivity.

    Second, we give an exact exponential-time algorithm that solves SCSS in $2^n n^{O(1)}$ time, improving the known bounds for general directed graphs. 

    Finally, we investigate kernelization with respect to vertex cover. We prove that SCSS does not admit a polynomial kernel when parameterized by the size of a vertex cover, unless the polynomial hierarchy collapses. In contrast, we show that the closely related \emph{Strongly Connected Spanning Subgraph} problem \emph{does} admit a polynomial kernel under the same parameterization.
\end{abstract}

\section{Introduction}
\label{sec:intro}
The \textsc{Steiner Tree} problem is one of the earliest and most fundamental problems in combinatorial optimization: Given an undirected graph $G=(V,E)$ along with a set $T \subseteq V$ of terminals, the objective is to find a subgraph of $G$ with the minimum number of edges that connects all the terminals with each other. Accordingly, the problem is called \textsc{Steiner Tree} since this graph can be assumed to be a tree. The parameterized complexity of the problem has been studied already since the 1970s, when Dreyfus and Wagner showed that the problem can be solved in $O^*(3^{|T|})$ time,\footnote{We use the $O^*()$ notation to hide factors polynomial in the input size.} and hence proved it to be fixed parameter tractable when parameterized by the number of terminals. Parameterized by the number of vertices of $G$, there is an $O^*(1.36^n)$-time algorithm~\cite{FominGKLS13}.

Another important parameter for \textsc{Steiner Tree} is the \emph{treewidth} of the input graph $G$, which measures how tree-like $G$ is. A simple dynamic programming algorithm (cf e.g.~\cite[Section 7.3.3]{CyganFKLMPPS15}) shows that an instance of \textsc{Steiner Tree} can be solved in $O^*(\tw^{\tw})$ time when a tree decomposition of $G$ of width $\tw$ is given.\footnote{This is a standard assumption made when parameterized by the treewidth of the input graph, and hence will not be explicitly stated in this paper.}
The $\tw^{\tw}$ dependency here comes from the apparent need to store the \emph{connectivity pattern} of partial solutions. However, in~\cite{CyganNPPRW22} it was shown that storing all such patterns is not needed and that, somewhat surprisingly, this running time can be improved to $O^*(2^{O(\tw)})$ at the cost of randomization. This was later derandomized in a follow-up paper to the conference version of~\cite{CyganNPPRW22}, namely~\cite{BodlaenderCKN15}.

Directed and undirected generalizations of \textsc{Steiner Tree}, such as \textsc{Directed Steiner Network}, \textsc{Group Steiner Tree}, and \textsc{Steiner Forest}, have also been studied extensively under these parameters. Examples include \textsc{Steiner Forest} parameterized by treewidth or vertex cover~\cite{FeldmannL25}, \textsc{Group Steiner Tree} parameterized by treewidth~\cite{ChalermsookDLV17}, and \textsc{Directed Steiner Tree} parameterized by the treewidth of the solution network~\cite{FeldmannM23}.

The most direct directed analogue of \textsc{Steiner Tree} is the following problem:

\defproblemu{Strongly Connected Steiner Subgraph (SCSS)}{A directed graph $D=(V,A)$, set of terminals $T$, and integer $t$.}{A set $X \subseteq A$ such that $|X|\leq t$ and for each $u,v \in T$ there is a $uv$-path formed by arcs of $X$.}

We are interested in finding out which results that hold for \textsc{Steiner Tree} translate to the directed setting of \textsc{SCSS} (and its variants). Since a solution of \textsc{SCSS} need not be a tree, the problem is typically more complex than its undirected counterpart. Recall that, \textsc{Steiner Tree} is FPT, whereas~\textsc{SCSS} is W[1]-hard~\cite{GuoNS09} and hence probably not FPT. On the positive side, the problem can be solved in $n^{O(|T|)}$ time, i.e., it is XP parameterized by $|T|$.

The complexity of \textsc{SCSS} has been studied extensively from several perspectives. For example, in the setting of planar graphs, parameterized sub-exponential time algorithms are known for the problem~\cite{ChitnisFHM20, FeldmannM23}. From the approximation point of view, there is an $O(\log^{2} n)$-approximation algorithm for \textsc{SCSS}, and this is tight in the sense that it is hard to approximate with factor $O(\log^{2-\varepsilon} n)$~\cite{SteinerSCS2003}.

\subparagraph{Problems Related to \textsc{SCSS}.}
To place our results in context, we also describe three related problems. See Figure~\ref{fig:probs} for the relationships among them. First, setting $T=V$ in \textsc{SCSS}, we obtain the following problem:

\defproblemu{Strongly Connected Spanning Subgraph (SCSpS)}{A directed graph $D=(V,A)$, and an integer $t$.}{A set $X \subseteq A$ such that $|X|\leq t$ and $(V,X)$ is strongly connected.}

An instance of \textsc{SCSpS} is automatically a NO-instance if $D$ is not strongly connected. If we relax the requirement that $D$ itself be strongly connected, we obtain the following problem:

\defproblemu{Minimum Equivalent Graph (MEG)}{A directed graph $D=(V,E)$, and an integer $t$.}{A set $X \subseteq E$ such that $|X|\le t$ and for each $uv$-path in $D$, there is a $uv$-path in $(V,X)$.}

The problem is well-motivated, since one often wants to preserve all connectivity properties while working with a sparser network. We refer to~\cite[Chapter 12]{BangGutinbook} for more information on \textsc{SCSpS} and \textsc{MEG}. It is shown in~\cite[Section 2.3]{BangGutinbook} that a given instance $(D,t)$ of \textsc{MEG} can be solved in polynomial time provided that, for each strongly connected component $C_i$ of $D$, one is given a minimum-cardinality spanning subdigraph of $D[C_i]$. Hence an $O^*(c^n)$-time (respectively, $O^*(c^{\tw})$-time) algorithm for \textsc{SCSpS} implies an $O^*(c^n)$-time (respectively, $O^*(c^{\tw})$-time) algorithm for \textsc{MEG}.

In~\cite{FominLPS16} it was shown that \textsc{SCSpS} (and hence, \textsc{MEG}) can be solved on $n$-vertex graphs in time $O(2^{4 \omega n}m n)$, where $\omega$ denotes the matrix multiplication constant. A randomized $O^*(4^n)$-time algorithm is given in \cite{EibenKW25}.

We will show that our methods can also be used to address $2$-connectivity constraints in undirected graphs via the following problem:
 
\defproblemu{$2$-Edge Connected Spanning Subgraph ($2$-\textsc{ECSS})}{An undirected graph $G=(V,E)$ and an integer $t$.}{A set $X \subseteq E$ such that $|X|\leq t$ and $(V,X)$ is $2$-edge connected.}

Note that \textsc{$2$-ECSS} can be reduced to \textsc{SCSpS} by replacing each edge with an arc in both directions.\footnote{We spell this out in Appendix~\ref{sec:red}.} Note that \textsc{$2$-ECSS} generalizes the problem of finding a Hamiltonian cycle in an undirected graph. The $2$-\textsc{ECSS} problem has been a very active topic of research. Very recently, a line of work has focused on improving its approximability~\cite{2ECSSSTOC2025,kobayashi2ECSS,2ecssSoda2023}, culminating in a $(5/4 - \delta)$-approximation by~\cite{HommelsheimLL26}, for some constant $\delta\ge 10^{-6}$. 

\begin{figure}
\begin{center}
\begin{tikzpicture}[
    node distance=0.2cm and 2cm,
    every node/.style={
        draw,
        rectangle,
        rounded corners,
        minimum width=2.6cm,
        minimum height=0.9cm,
        align=center
    },
    arrow/.style={->, line width=0.6mm}
]

% Nodes (compact layout)
\node (n2) {\textsc{SCSpS}};
\node (n1) [left=of n2] {\textsc{SCSS}};
\node (n3) [above right=of n2] {\textsc{MEG}};
\node (n4) [below right=of n2] {\textsc{$2$-ECSS}};

% Arrows
\draw[arrow] (n1) -- (n2);

% Curved bidirectional arrows between n2 and n3
\draw[arrow, bend left=20] (n2) to (n3);
\draw[arrow, bend left=20] (n3) to (n2);

\draw[arrow] (n2) to (n4);

\end{tikzpicture}
\end{center}
\caption{The four problems \textsc{Strongly Connected Steiner Subgraph (SCSS)}, \textsc{Strongly Connected Spanning Subgraph (SCSpS)}, \textsc{Minimum Equivalent Graph (MEG)} and \textsc{$2$-Edge Connected Spanning Subgraph ($2$-ECSS)} studied in this paper. An arrow denotes the relation ``can be efficiently reduced to'', and hence an algorithm for \textsc{SCSS} implies an algorithm for all other problems with comparable run time.}
\label{fig:probs}
\end{figure}
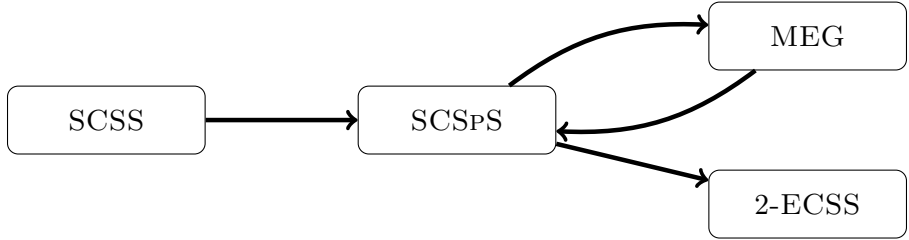

\subsection{Our results.}
\subparagraph{Treewidth Parameterization.}
We first study the complexity of \textsc{SCSS} parameterized by the treewidth of the underlying undirected graph. In light of the aforementioned surprisingly efficient method that deals with undirected connectivity~\cite{BodlaenderCKN15,CyganNPPRW22}, it is natural to wonder whether a similar approach can be used to deal with strong connectivity as well. We show that, at least for \textsc{SCSS}, this is indeed the case:
\begin{restatable}{theorem}{thmtwMain}\label{tw:main}
    There is a Monte Carlo algorithm that solves \textsc{SCSS} in $O^*(17^{\tw})$ time, where $\tw$ is the treewidth of the underlying undirected graph of $D$.
\end{restatable}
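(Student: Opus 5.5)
Fix a terminal $r\in T$. A set $X$ of arcs solves the instance exactly when every terminal can reach $r$ and can be reached from $r$ in $(V,X)$. I would therefore certify a solution as a pair $(X_{\mathrm{in}},X_{\mathrm{out}})$ with $X=X_{\mathrm{in}}\cup X_{\mathrm{out}}$. Here $X_{\mathrm{out}}$ is an out-arborescence rooted at $r$ spanning all terminals, and $X_{\mathrm{in}}$ is an in-arborescence into $r$ spanning all terminals. The cost is $|X_{\mathrm{in}}\cup X_{\mathrm{out}}|$, not the sum of the two sizes, so shared arcs are counted once.

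The connectivity requirement will be handled by the Cut\&Count method of \cite{CyganNPPRW22}, applied separately to the two arborescences. Each arborescence is a connected structure and can be counted modulo $2$ through consistent cuts. For $X_{\mathrm{out}}$, take pairs $(Y_{\mathrm{out}},C_{\mathrm{out}})$ where $C_{\mathrm{out}}$ is a 2-colouring of the vertices of $Y_{\mathrm{out}}$, $r$ is on the left side, and no arc of $Y_{\mathrm{out}}$ crosses the cut. The parity of consistent cuts is $2^{\cc-1}$, so only connected candidates survive modulo $2$.

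Counting arborescences rather than arbitrary connected arc sets needs a degree constraint: every non-root vertex of $Y_{\mathrm{out}}$ has in-degree exactly $1$, and $|Y_{\mathrm{out}}|=|V(Y_{\mathrm{out}})|-1$. With this constraint, a connected $Y_{\mathrm{out}}$ is exactly an arborescence. The in-part is symmetric, with out-degree $1$.

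Uniqueness will come from the Isolation Lemma. I assign random weights to pairs (arc, role), where the role says whether the arc is used in $X_{\mathrm{in}}$, in $X_{\mathrm{out}}$, or in both. The dynamic program will count, modulo $2$, certificates of a prescribed total cost $t'$ and weight $w$. The goal is that the minimum-weight certificate is unique. One subtlety is that a single solution $X$ has many certificate pairs. I would therefore isolate over pairs directly. Minimal solutions are then still found: if some certificate of cost at most $t$ exists, the isolated minimum-weight certificate at the minimum achievable cost gives an odd count.

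The DP runs over a nice tree decomposition. Each bag vertex $v$ gets a state recording:
\begin{itemize}
\item its membership and cut side for the in-structure: not in $V(Y_{\mathrm{in}})$, left, or right;
\item the same for the out-structure;
\item whether its required in-arc (in $Y_{\mathrm{out}}$) has already been chosen;
\item whether its required out-arc (in $Y_{\mathrm{in}}$) has already been chosen.
\end{itemize}
Naively this gives $(1+2\cdot 2)^2=25$ states per vertex. Terminals are forced to be present in both structures.

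To reach $17$, I would prune and merge states. The in/out degree flags only matter for present vertices. Where needed, I would count the complement of the in-degree condition via inclusion–exclusion. Join nodes would be handled by fast subset convolution over the flag coordinates, which costs the same as the state count. I expect the count to come out to $1+4+4+8=17$-type combinations after these reductions.

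The main obstacle is precisely this state accounting, together with counting shared arcs once in the cost. At introduce-edge nodes, an arc may be added to $Y_{\mathrm{in}}$, to $Y_{\mathrm{out}}$, or to both, and the cost increases by $1$ in each case. This is local, so it is fine.

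Correctness follows from two facts. Non-arborescence pairs cancel modulo $2$: an unconnected candidate has an even number of consistent cuts, since the other components can be flipped. And with probability at least $1/2$ the isolated minimum certificate contributes exactly $1$. The running time is $17^{\tw}n^{O(1)}$, with polynomial weight and cost ranges.
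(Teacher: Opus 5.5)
\paragraph{Review.} Your framework matches the paper's: two relaxed branchings rooted at $r$, an independent consistent cut for each with $r$ fixed on the left, isolation over (arc, role) pairs, cost equal to the size of the union, and degree flags combined at join nodes by fast subset convolution. The correctness argument (parity $2^{(\cc(Y_{\mathrm{in}})-1)+(\cc(Y_{\mathrm{out}})-1)}$ plus isolation) is fine.

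\paragraph{The gap: the base $17$.} The gap is the running time, which is the quantitative content of the theorem. In your setup the two arborescences may span different vertex sets. A bag vertex then has $1+4+4+16=25$ states:
\begin{itemize}
\item absent from both;
\item in $Y_{\mathrm{in}}$ only, with $2$ cut sides times $2$ flag values;
\item in $Y_{\mathrm{out}}$ only, likewise;
\item present in both, with $2^2$ cut patterns times $2^2$ flag patterns.
\end{itemize}
The join fixes the presence and cut pattern and then runs a subset convolution over the flags. It costs $\prod_{v}(1+2\cdot 2+2\cdot 2+4\cdot 4)=25^{q}$ on a bag of size $q$, so you get $O^*(25^{\tw})$. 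Your ``$1+4+4+8$'' is not backed by any sound reduction. The both-present block cannot shrink to $8$, because the two cuts must remain independent. A single common cut only certifies weak connectivity of $Y_{\mathrm{in}}\cup Y_{\mathrm{out}}$. For example, it gives odd weight to $Y_{\mathrm{out}}=\{(r,a),(r,b)\}$ with $Y_{\mathrm{in}}=\{(a,b),(b,a)\}$, where $a$ cannot reach $r$. Inclusion--exclusion on a degree flag also does not remove a coordinate from the convolution.

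\paragraph{The missing idea: compatibility.} Require $V(Y_{\mathrm{in}})=V(Y_{\mathrm{out}})$. This is without loss of generality. For an optimal $X$, let $C$ be the strongly connected component of $(V,X)$ containing $r$. Then $C\supseteq T$ and $X[C]$ is strongly connected. Hence $X[C]$ contains an in-arborescence and an out-arborescence, both rooted at $r$ and spanning exactly $C$, whose union has at most $|X|$ arcs. The cut-counting lemma is unaffected, since the two cuts are still chosen independently on the common vertex set. Now every bag vertex is either absent from both structures or present in both. This leaves $1+2^4=17$ states, and the join costs $\sum_{q'}\binom{q}{q'}4^{q'}\cdot 4^{q'}=17^{q}$ up to polynomial factors. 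That is exactly the accounting behind the claimed bound.
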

% \begin{theorem}\label{tw:main}
% There is a Monte Carlo algorithm that solves \textsc{SCSS} in $O^*(17^{\tw})$ time, where $\tw$ is the treewidth of the undirected graph obtained from $D$ by omitting directions.
% \end{theorem}
As with the method of~\cite{CyganNPPRW22}, our general idea can be applied to many problems and settings. We expect that it also works for other local problems augmented with a strong connectivity constraint.

To prove Theorem~\ref{tw:main} we use the characterization that a directed graph $D$ is strongly connected if and only if, for an arbitrarily chosen vertex $r$, there exist both an in-branching and an out-branching in $D$ that are rooted at $r$. This characterization has been used in previous related works as well~\cite{DBLP:journals/algorithmica/AgrawalMPS22,FominLPS16}, but these works were only able to get $2^{O(n)}$-time algorithms. To obtain our algorithm, we combine this characterization with the methods from~\cite{CyganNPPRW22} in a more efficient, but still highly non-trivial, way.

Using the mentioned reductions, we obtain the following direct consequence of Theorem~\ref{tw:main}:
\begin{corollary}\label{cor:main}
There is a Monte Carlo algorithm that solves \textsc{SCSpS}, \textsc{MEG} and \textsc{$2$-ECSS} in $O^*(17^{\tw})$ time, where $\tw$ is the treewidth of the underlying undirected graph.
\end{corollary}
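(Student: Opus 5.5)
The plan is to derive the corollary from Theorem~\ref{tw:main} through the chain of reductions in Figure~\ref{fig:probs}. For each reduction I must check that it runs in polynomial time, that it preserves the answer, and that it increases the treewidth of the underlying undirected graph by at most an additive constant. The last point matters because any $O(1)$ additive loss in $\tw$ only costs a constant factor in $17^{\tw}$, which the $O^*$ notation absorbs.

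\textbf{SCSpS.} This is the special case $T=V$ of \textsc{SCSS}. I map an instance $(D,t)$ of \textsc{SCSpS} to the instance $(D,V,t)$ of \textsc{SCSS}. The graph is unchanged, so the treewidth is identical and Theorem~\ref{tw:main} applies directly. The same reasoning covers the trivial case in which $D$ is not strongly connected, since then both instances are no-instances.

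\textbf{MEG.} I use the fact cited from~\cite[Section 2.3]{BangGutinbook}: an instance $(D,t)$ of \textsc{MEG} can be solved in polynomial time once a minimum spanning strongly connected subdigraph of $D[C_i]$ is known for every strongly connected component $C_i$. Each $D[C_i]$ is an induced subgraph of $D$, and treewidth is monotone under taking subgraphs, so $\tw(D[C_i])\le\tw(D)$. Restricting the given tree decomposition to $C_i$ yields a decomposition of $D[C_i]$ of no larger width.

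One point needs care. Theorem~\ref{tw:main} is a decision algorithm, whereas the MEG procedure needs the optimal value of \textsc{SCSpS} on each component, and possibly an optimal arc set. Getting the value is easy: I run the decision version for every $t\le |A|$ and take the least $t$ that is accepted. If the procedure needs an actual witness set, I use self-reduction. I delete arcs one at a time and keep a deletion whenever the optimum does not increase. This uses polynomially many calls, and deleting arcs does not increase treewidth.

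\textbf{Monte Carlo error.} Because the algorithm is Monte Carlo, I must bound the total error over these polynomially many calls. I would amplify each call by independent repetition so that its error probability is at most $1/\mathrm{poly}(n)$ with a large enough polynomial. A union bound then gives constant overall success probability, still within $O^*(17^{\tw})$ time. I expect this bookkeeping (optimization and search from a decision oracle, together with error control) to be the only part needing any real attention.

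\textbf{2-ECSS.} Following Appendix~\ref{sec:red}, I replace each undirected edge by a pair of opposite arcs and then solve \textsc{SCSpS}. The underlying undirected graph is the original graph, so the treewidth is unchanged. For correctness, the reduction must be answer-preserving with the budget handled exactly as in the appendix. The natural budget choice is $2t$ when each undirected edge is counted as both of its arcs. Since the appendix spells out this reduction, I take its correctness as given. Combining the three cases proves the corollary.
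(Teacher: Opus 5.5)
Your overall route is the paper's: \textsc{SCSpS} as the case $T=V$ of \textsc{SCSS}, \textsc{MEG} via the component-wise characterization from Bang-Jensen and Gutin, and \textsc{$2$-ECSS} via the bidirected reduction of Appendix~\ref{sec:red}. The extra bookkeeping you add is fine: optimization and search through the decision oracle, and amplification of the Monte Carlo error, which is one-sided here. The one concrete error is the budget in the \textsc{$2$-ECSS} step. The appendix keeps the budget $t$, and $2t$ does \emph{not} give an equivalent instance. If you include both orientations of every chosen edge, strong connectivity of the result is just connectivity of the chosen edge set. For example, take the triangle $K_3$ with $t=2$. The \textsc{$2$-ECSS} instance is a no-instance, but a path through the three vertices, with both orientations of its two edges, is a strongly connected spanning subdigraph with $4=2t$ arcs. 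More generally, for $n\ge 3$ and $t=n-1$, a doubled spanning tree always makes the \textsc{SCSpS} instance a yes-instance, while every $2$-edge-connected spanning subgraph has at least $n$ edges.

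The correct correspondence is one arc per edge. In the forward direction, a $2$-edge-connected spanning subgraph with at most $t$ edges has a strong orientation by Robbins' theorem. That orientation is a strongly connected spanning subdigraph of $D$ with at most $t$ arcs. For the converse, take a minimum strongly connected spanning subdigraph $H$ whose underlying graph has the fewest bridges. Suppose $\{u,v\}$ is such a bridge, with sides $X\ni u$ and $Y\ni v$. Then both $(u,v)$ and $(v,u)$ lie in $H$, and $H[X]$ and $H[Y]$ are strongly connected. Since $G$ is $2$-edge connected, some other edge $\{x,y\}$ of $G$ crosses the cut, with $x\in X$ and $y\in Y$. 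Replacing $(v,u)$ by $(y,x)$ keeps $H$ strongly connected with the same arc count and strictly fewer bridges, a contradiction. Hence the underlying graph of $H$ is $2$-edge connected and has at most $|A(H)|\le t$ edges. With the budget left at $t$, your proof is complete. As written, though, it defers to the appendix while stating a budget that contradicts it.
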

In~\cite{BergerG07}, the authors solve a weighted variant of \textsc{$2$-ECSS} in $O^*(\tw^{O(\tw)})$ time.

\subparagraph{Exact Exponential Time Algorithm.}
\textsc{SCSS} generalizes the problem of detecting a Hamiltonian cycle in a directed graph. Improving exact exponential-time algorithms for Hamiltonian cycle is an active line of research; in particular, one seeks algorithms running in $1.999^n$ time on $n$-vertex graphs (see e.g.~\cite{BjorklundKK17}). Motivated by this line of work and the previously mentioned exact exponential-time algorithms for \textsc{Steiner Tree} and \textsc{MEG}, we study exact exponential-time algorithms for \textsc{SCSS}. Concretely, we prove the following:

\begin{theorem}\label{thm:scss-exact}
  \textsc{SCSS} can be solved in $O^*(2^n)$ time.
\end{theorem}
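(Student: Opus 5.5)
A natural plan is to reduce SCSS to finding an in-branching and an out-branching rooted at a terminal $r$, both spanning the same vertex set $S\supseteq T$. The goal is to minimise the number of arcs in their union.

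For a fixed $S$, the minimum union size is not simply the sum of the minimum in- and out-branching sizes, because shared arcs count once. So we cannot optimise the two branchings separately. Instead we use a structural fact about minimal strongly connected subgraphs: every minimal strongly connected digraph has an \emph{ear decomposition}. A closed ear or path ear with $\ell$ internal vertices adds exactly $\ell+1$ arcs, so the arc count of the solution on vertex set $S$ is $|S|-1+(\text{number of ears})$. Minimising arcs then means minimising the number of ears needed to cover $S$.

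This suggests a subset dynamic programme over vertex sets $U$ that contain $r$ and are strongly connected using the chosen arcs. Let $f(U)$ be the minimum number of arcs of a strongly connected subgraph with vertex set exactly $U$. The transition is
\[
f(U)=\min_{W\subsetneq U,\; r\in W} \bigl(f(W)+|U\setminus W|+1\bigr),
\]
taken over all $W$ such that $U\setminus W$ can be traversed by a single path from $W$ back to $W$ whose internal vertices are exactly $U\setminus W$. Computed naively this gives $3^n$. To reach $2^n$ we must avoid enumerating the pair $(W,U)$. Since each ear adds a fixed number of arcs relative to its vertices, a good route is to guess the number of ears $k$ and count, via inclusion--exclusion over the vertex set, the ear sequences that cover all of $S$. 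The answer is then the minimum over $S\supseteq T$ of $|S|-1+k$.

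Concretely, I would use the inclusion--exclusion/walk-counting technique behind the $O^*(2^n)$ Hamiltonicity algorithms. Define an \emph{ear walk}: a sequence of $k$ walks, each leaving the current "attached" set and returning, with total length $|S|-1+k$. For each $X\subseteq V$ we count these ear walks avoiding $X$ using polynomial-time DP on lengths. Summing $(-1)^{|X|}$ times these counts over $X\subseteq V\setminus S$ (or the appropriate Möbius form), and requiring that all terminals be covered, isolates the sequences that visit each vertex of $S$ exactly once, apart from ear endpoints. To handle all candidate $S$ simultaneously rather than in $2^n$ separate loops, I would add a weight variable tracking $|S|$ and apply a zeta transform. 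This keeps the overall cost at $2^n n^{O(1)}$. Counts are taken modulo a random prime, or weighted by isolation, to avoid cancellation, which gives a Monte Carlo algorithm.

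The main obstacle is the crux of the algorithm. An ear must attach to vertices already visited, so ear walks depend on history, and they do not factor as cleanly as a single Hamiltonian walk. Standard inclusion--exclusion only controls "no vertex outside $X$", not "each ear endpoint lies in the earlier ears".

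My first attempt is to reorder the ears canonically, for example by a fixed vertex order, and let each ear carry only its endpoint "labels". Overcounting would then be handled by showing that every spurious walk pairs off in a sign-reversing involution. If that fails, the fallback is to switch to the characterisation by an in-branching and an out-branching with a maximised shared arc set. Branchings are countable by the matrix-tree theorem, so one can count pairs with a weight variable for shared arcs, summed over $S$ by a Möbius transform within $2^n$ polynomial-determinant evaluations.
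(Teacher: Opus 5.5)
\textbf{There is a genuine gap: the step you call the crux is never carried out, so the proposal does not reach $O^*(2^n)$.} Your reduction to minimising the number of ears is correct: a subgraph on $S$ with a $k$-ear decomposition has exactly $|S|-1+k$ arcs. Your $3^n$ subset recurrence is also correct. The problem is what follows.

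Inclusion--exclusion over ``ear walks'' only controls which vertices are hit. It does not enforce that each ear's endpoints lie in earlier ears and that its internal vertices are new. For this you offer only a hoped-for sign-reversing involution, which is never constructed.

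The fallback does not close the gap either. The matrix-tree theorem counts single branchings, and $\det L_{\mathrm{in}}\cdot\det L_{\mathrm{out}}$ enumerates the two branchings independently. You give no way to attach a variable to shared arcs within $2^n$ determinant evaluations. Earlier algorithms built on this characterization only reach much larger exponential bases. Also, any route through random primes or isolation yields only a Monte Carlo algorithm, whereas the paper proves the statement deterministically.

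\textbf{The missing idea:} the $3^n$ recurrence you wrote down already \emph{is} a subset convolution once you make two changes.

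First, guess the endpoints $(s,t)$ of the last ear. This splits your joint condition on $(W,U)$ into two independent conditions:
\begin{itemize}
\item $\{s,t\}\subseteq W$, a condition on $W$ alone (a mask);
\item $B:=U\setminus W$ is exactly the internal vertex set of an $s\to t$ path (or of a cycle through $s$ if $s=t$), a condition on $B$ alone.
\end{itemize}
The table $F_{s,t}[B]$ for the second condition is computed for all $B$ in $O^*(2^n)$ by the standard Hamiltonian-path subset DP.

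Second, index the table by the number $q$ of ears instead of letting $f$ recurse on itself. Then $T_q[X]=\min_{(s,t)}\min_{X=A\sqcup B}\bigl(S_{q-1,s,t}[A]+F_{s,t}[B]\bigr)$, where $S_{q-1,s,t}$ is $T_{q-1}$ masked to sets containing $s,t$. This is an ordinary (min,+) subset convolution of two fixed tables with values $O(n)$.

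In this convolution, disjointness of $A$ and $B$ forces the internal vertices to be new, and the mask forces the endpoints to be old. The set $A$ is exactly the ``history'' you were worried about, and storing it is free in a table of size $2^n$. Fast subset convolution costs $O^*(2^n)$ per triple $(q,s,t)$, giving $O^*(2^n)$ overall with no inclusion--exclusion and no randomness.
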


While Theorem~\ref{tw:main} used a characterization of strongly connected digraphs in terms of in/out-branchings, Theorem~\ref{thm:scss-exact} crucially relies on a characterization in terms of \emph{ear decompositions}. This is used in combination with a dynamic programming algorithm (over subsets of the vertex set) and the fast subset convolution method from~\cite{BjorklundHKK07}.

As before, we can use the aforementioned reductions to obtain the following corollary:
\begin{corollary}
    The \textsc{SCSpS}, \textsc{MEG} and \textsc{$2$-ECSS} problems on $n$-vertex graphs can be solved in $O^*(2^n)$ time.
\end{corollary}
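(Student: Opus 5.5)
The plan is to obtain the corollary purely by reduction from Theorem~\ref{thm:scss-exact}, handling the three problems one by one and checking in each case that the vertex count does not grow, so that the $O^*(2^n)$ bound carries over unchanged.

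\textsc{SCSpS} is the special case of \textsc{SCSS} with $T=V$. Given an instance $(D,t)$ of \textsc{SCSpS}, we run the algorithm of Theorem~\ref{thm:scss-exact} on $(D,V,t)$. A set $X$ gives a $uv$-path for all $u,v\in V$ exactly when $(V,X)$ is strongly connected. The graph is unchanged, so we get time $O^*(2^n)$.

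For \textsc{MEG} on $D=(V,E)$, we compute the strongly connected components $C_1,\dots,C_k$ of $D$ in polynomial time. For each $i$ we need a minimum-cardinality strongly connected spanning subdigraph of $D[C_i]$. We obtain it by solving \textsc{SCSpS} on $D[C_i]$ for every threshold $t'\le |E|$, or by binary search, and self-reducing to extract an optimal arc set. The extraction works by deleting arcs one at a time and keeping a deletion whenever the optimum value does not increase. This costs a polynomial number of calls, each taking $O^*(2^{|C_i|})$ time. Summing over the components gives $\sum_i O^*(2^{|C_i|})\le O^*(2^n)$, since $\sum_i 2^{|C_i|}\le 2^n$ when $\sum_i |C_i|=n$ (and at most $n$ terms appear). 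The result cited from~\cite[Section 2.3]{BangGutinbook} then combines these optimal subdigraphs into an optimal \textsc{MEG} solution in polynomial time. We answer yes iff that solution has size at most $t$.

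For \textsc{$2$-ECSS} on $G=(V,E)$, we replace every edge $uv$ by the two arcs $(u,v)$ and $(v,u)$, obtaining a digraph $D$ on the same $n$ vertices, and solve \textsc{SCSpS} on $D$. The correspondence to prove is the following: $G$ has a $2$-edge-connected spanning subgraph with at most $t$ edges iff $D$ has a strongly connected spanning subgraph with at most $t$ arcs in which no edge contributes both of its arcs. We avoid that extra constraint by invoking the reduction spelled out in Appendix~\ref{sec:red}, and the main step is checking that this reduction preserves the vertex set, or at most adds $O(\log n)$ vertices, so the bound stays $O^*(2^n)$.

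If instead one argues directly, the two directions go as follows.
\begin{itemize}
\item From $G$ to $D$: a $2$-edge-connected graph has a strongly connected orientation by Robbins' theorem, using the same number of arcs.
\item From $D$ to $G$: given a strongly connected $X$ using both arcs of some edge, take a minimal such $X$. The underlying edge set of a strongly connected digraph is connected and bridgeless whenever no edge is used in both directions. When both directions are used, the size comparison needs care. Here we rely on the known fact that for $n\ge 3$ a minimum strongly connected spanning subgraph of a symmetric digraph can be chosen without $2$-cycles, or else we simply cite the appendix.
\end{itemize}

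Handling this last point, the equivalence of optima for symmetric digraphs, is the main obstacle. The plan is to defer to the paper's appendix reduction rather than reprove it.
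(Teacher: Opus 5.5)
Your proposal is correct and follows the paper's route. The paper obtains the corollary by composing Theorem~\ref{thm:scss-exact} with the vertex-preserving reductions from the introduction: set $T=V$ for \textsc{SCSpS}; solve \textsc{SCSpS} on each strongly connected component and combine via \cite[Section 2.3]{BangGutinbook} for \textsc{MEG}; and bidirect every edge as in Lemma~\ref{lem:ecssred} for \textsc{$2$-ECSS}. Your extra bookkeeping ($\sum_i 2^{|C_i|}\le n2^n$, self-reduction) is harmless.

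Two small corrections on the \textsc{$2$-ECSS} part. The appendix reduction uses exactly the same $n$ vertices, so your $O(\log n)$ hedge is unnecessary. It also shows the unconstrained optima coincide by a bridge-exchange argument, but only after first rejecting any $G$ that is not $2$-edge-connected. Your fallback ``known fact'' is missing that check: the symmetric digraph of a tree is strongly connected, yet every strongly connected spanning subgraph of it contains all the $2$-cycles.
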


Our result not only significantly improves the previously mentioned $O(2^{4 \omega n}m n)$-time algorithm for \textsc{MEG} from~\cite{FominLPS16}, but is also arguably much simpler.

\subparagraph{Kernelization parameterized by Vertex Cover.}
Finally, we investigate the complexity of these problems under parameterization by the vertex cover of the input graph. Parameterization by the vertex cover of a graph is a popular topic in parameterized complexity. For example, in~\cite{FeldmannL25} the authors studied the complexity of \textsc{Steiner Forest} parameterized by the vertex cover of the input graph. We study the \emph{kernelization complexity} of these problems, following earlier work on variants of the \textsc{Traveling Salesperson Problem}~\cite{BlazejCKSSV22,MannensNSS21}.
Concretely, we show:

\begin{theorem}\label{thm:ker}
\textsc{SCSpS} admits a quadratic kernel parameterized by the vertex cover of the underlying undirected graph, but \textsc{SCSS} does not admit a polynomial kernel parameterized by the vertex cover of the underlying undirected graph unless $\mathrm{NP} \subseteq \mathrm{coNP}/\mathrm{poly}$. 
\end{theorem}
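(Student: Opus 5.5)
The statement has two independent parts: a quadratic kernel for \textsc{SCSpS}, and a kernel lower bound for \textsc{SCSS}. I handle them separately.

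\textbf{Kernel for \textsc{SCSpS}.} Let $C$ be a vertex cover of the underlying graph with $|C| = k$; a $2$-approximation suffices. Put $I = V \setminus C$, which is independent. If $D$ is not strongly connected, output a trivial NO-instance. Otherwise, every vertex $v \in I$ has at least one in-neighbour and one out-neighbour, both in $C$, so every solution uses at least two arcs at $v$. I would classify each $v \in I$ by its \emph{type}, the pair $(N^-(v), N^+(v))$ of subsets of $C$. The types are too many, so I aim for a quadratic bound by marking.

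\begin{itemize}
\item For each ordered pair $(a,b) \in C \times C$, mark up to a constant number (say $2$ or $3$) of vertices $v \in I$ with $a \to v \to b$.
\item Every unmarked $v \in I$ is routed through one of its in-arcs and one of its out-arcs; it contributes exactly two arcs in some optimal solution and only needs to be attached to the strongly connected structure on $C \cup$ marked vertices.
\end{itemize}

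The claim to prove is an exchange argument. Take an optimal solution $X$, and suppose an unmarked vertex $v$ has more than two incident solution arcs, for instance $a \to v \to b$ and $a' \to v \to b'$. Then the role of $v$ as a connector can be transferred to marked vertices without increasing $|X|$. Consequently, unmarked vertices can be deleted, after fixing them to cost exactly $2$ each: remove them and decrease $t$ by $2$ per removed vertex, provided that $C \cup M$ (with $M$ the marked vertices) still admits the needed connectivity. For this I need that each removed $v$ can be reattached through some $a \in N^-(v)$ and $b \in N^+(v)$, which holds since $C \cup M$ will be strongly connected in any solution.

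The resulting instance has $O(k^2)$ vertices and $O(k^3)$ arcs. To reach a truly quadratic size in arcs, I would mark only $O(1)$ vertices per pair and note that arcs inside $C$ number at most $k^2$. The main obstacle is the exchange argument: an unmarked vertex may serve simultaneously as the only route for several pairs, and I must show that marking a bounded number of vertices per pair suffices to replace it. I would try a counting argument showing that at most $2$ arcs of $X$ are needed per intermediate vertex on $a$-$b$ connections.

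\textbf{Lower bound for \textsc{SCSS}.} I would give an OR-cross-composition from an NP-hard problem, for instance \textsc{Hamiltonian Cycle} or \textsc{SCSS} itself on small instances. The plan is to take $\ell$ instances on vertex set $[n]$ that are equivalent under a polynomial equivalence relation. Their shared structure is encoded on a common vertex cover of size $\mathrm{poly}(n)$, while each instance is represented by terminal-free selector vertices in the independent side. Terminals are placed so that a cheap solution must select exactly one instance's arcs, using a budget gadget.

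A natural alternative is a reduction from \textsc{Set Cover} parameterized by universe size, which has no polynomial kernel. The universe elements become terminals inside the cover $C$, a root is added, and each set $S$ becomes a non-terminal vertex $v_S \in I$ with arcs from the root to $v_S$ and from $v_S$ to each element of $S$, plus arcs back from the elements to the root. The budget is then $t = \kappa \cdot (\text{cost per chosen set}) + |U|$. The vertex cover has size $|U| + 1$, so a polynomial kernel for \textsc{SCSS} would give one for \textsc{Set Cover} by universe size. The cost-accounting check is that each $v_S$ adds one in-arc plus one out-arc per covered element, which is fixed by replacing the single out-arcs with fixed-cost gadgets so that a solution's cost is exactly linear in the number of chosen sets.
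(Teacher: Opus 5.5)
\textbf{Gap in the \textsc{SCSpS} kernel.} The kernel half breaks at the exchange argument you flag as the main obstacle, and marking a constant number of vertices per pair does not survive it. Deleting the unmarked set $R$ and lowering $t$ by $2|R|$ is safe exactly when some optimal solution uses exactly two arcs at each vertex of $R$ and is still strongly connected without them. Every kept vertex of $I$ must itself receive an in-arc and an out-arc. So the routes that went through deleted vertices must be absorbed by kept vertices at no extra cost, which requires \emph{distinct}, otherwise idle, kept representatives. A fixed number of marks per pair does not guarantee this. Your assertion that $C\cup M$ is strongly connected in every solution is unjustified and false in general.

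Here is a concrete counterexample. For any constant $c$, let $C=\{a_1,\dots,a_{c+1}\}$ with no arcs inside $C$. Let $u_j\in I$ have arcs $a_j\to u_j\to a_{j+1}$ (indices mod $c+1$). Let $m_1,\dots,m_c\in I$ each be joined to every $a_j$ in both directions. Each pair $(a_j,a_{j+1})$ has $c+1$ candidates, so marking $m_1,\dots,m_c$ everywhere is a legal choice, and then $R=\{u_1,\dots,u_{c+1}\}$. We have $\mathrm{OPT}(D)=2|I|=4c+2$: take the cycle through all $u_j$, plus $a_1\to m_i\to a_1$ for each $i$. In $D-R$, however, every $a_j$ needs an arc to and an arc from $\{m_1,\dots,m_c\}$. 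That is at least $2c+2$ arcs, exceeding the new budget $4c+2-2(c+1)=2c$. So a \yes-instance is mapped to a \no-instance.

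\textbf{The missing idea.} You need to choose the kept vertices by a matching with a closure property, and do the exchange globally rather than vertex by vertex. The paper applies the expansion lemma (Lemma~\ref{lem:scss-new-expansion}) to the bipartite graph $B$ between $T=S\times S$ and $I$, where $(u,w)$ is adjacent to $v$ iff $u\to v\to w$. This yields $T_1\subseteq T$ and $I_1\subseteq I$ with a matching $M$ saturating $T_1$ into $I_1$ and $N_B(I_1)\subseteq T_1$. Only $R=I_1\setminus V(M)$ is deleted.

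Given a solution $H$, remove \emph{all} of $I_1$, which frees at least $2|I_1|$ arcs. Then add $u\to m_\tau\to w$ for every $\tau=(u,w)\in T_1$, which costs exactly $2|T_1|=2|I_1|-2|R|$ arcs. Every segment $a\to z\to b$ of $H$ with $z\in I_1$ has $(a,b)\in N_B(z)\subseteq T_1$. It can therefore be rerouted through $m_{(a,b)}$, and strong connectivity is preserved. The third property of the lemma gives $|I\setminus R|\le |T|\le k^2$. Equivalently, you can keep exactly the $I$-vertices covered by a maximum matching of $B$; alternating-path reachability from the unmatched ones produces such $T_1,I_1$.

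\textbf{Lower bound.} Your \textsc{Set Cover} reduction is essentially the paper's; the paper routes $u_j\to t\to s$ through an extra vertex $t$, which is inessential. Your accounting is off, though, and the gadgets are unnecessary. Each terminal has the root as its only out-neighbor and needs one in-arc from some $v_S$. So every solution has at least $2|U|$ such arcs, plus one arc from the root per set vertex it uses. The correct budget is therefore $\kappa+2|U|$, with no gadgets. The cross-composition alternative is too vague to evaluate, but it is not needed.
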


The kernel is proved using matching arguments (via the \emph{expansion lemma} from~\cite{FLLSTZ19}) to exploit the fact that, for every vertex outside the independent set, both an incident out-arc and an incident in-arc must be selected. The conditional impossibility of a polynomial kernel is proven via a reduction to \textsc{Set Cover} parameterized by the universe size.

\subparagraph{Organization.}
In Section~\ref{sec:tw} we prove Theorem~\ref{tw:main}. In Section~\ref{sec:excexp} we prove Theorem~\ref{thm:scss-exact}, and in Section~\ref{sec:ker} we prove Theorem~\ref{thm:ker}. In Section~\ref{sec:con} we provide directions for further research, and in the Appendix~\ref{sec:red} we give the postponed reduction.

\section{Parameterization by Treewidth}\label{sec:tw}
In this section, we present the algorithm promised in Theorem~\ref{tw:main} for the \textsc{SCSS} problem parameterized by treewidth based on Cut\&Count technique, which was introduced in~\cite{CyganNPPRW11}.
\subsection{Preliminaries}

\paragraph*{Treewidth.}
\begin{definition}[Tree Decomposition~\cite{RobertsonS84}]\label{def:treeDecomp}
    A \emph{tree decomposition} of an (undirected or directed) graph~$G$ is a tree~$\treedecomp$ in which each vertex~$x \in \treedecomp$ has an assigned set of vertices~$\beta(x) \subseteq V$ (called a \emph{bag}) such that $\bigcup_{x \in \treedecomp} \beta(x) = V$ with the following properties:
    \begin{itemize}
        \item for any $(u, v) \in E$, there exists an~$x \in \treedecomp$ such that $u,v \in \beta(x)$.
        \item if $v \in \beta(x)$ and $v \in \beta(y)$, then $v \in \beta(z)$ for all $z$ on the path from $x$ to $y$ in $\treedecomp$.
    \end{itemize}
\end{definition}
The \emph{treewidth}~$\tw(\treedecomp)$ of a tree decomposition~$\treedecomp$ is the size of the largest bag of $\treedecomp$ minus one, and the treewidth of a graph $G$ is the minimum treewidth over all tree decompositions of~$G$.

Dynamic programming algorithms on tree decompositions are often presented on nice tree decompositions which were introduced by Kloks~\cite{Kloks94}.
We present a slightly refined definition of a nice tree decomposition of directed graphs from~\cite{CyganNPPRW22}.

\begin{definition}[Nice Tree Decomposition of Directed Graphs] \label{def:nicetreedecomp}
    A \emph{nice tree decomposition} of a directed graph is a tree decomposition with one special bag $z$ called the \emph{root} with $\beta(z) = \emptyset$ and in which each bag is one of the following types:
    \begin{description}
        \item[\upshape Leaf bag:]
            a leaf $x$ of $\treedecomp$ with $\beta(x) = \emptyset$.
        \item[\upshape Introduce vertex bag:]
            an internal vertex~$x$ of $\treedecomp$ with one child vertex~$y$
            for which $\beta(x) = \beta(y) \cup \{v\}$ for some
            $v \notin \beta(y)$. This bag is said to \emph{introduce} $v$.
        \item[\upshape Introduce arc bag:]
            an internal vertex~$x$ of $\treedecomp$ labeled with arc from $u$
            to $v$ with one child bag~$y$ for which
            $u,v \in \beta(x) = \beta(y)$. This bag is said to \emph{introduce}
            arc $(u,v)$.
        \item[\upshape Forget bag:]
            an internal vertex~$x$ of $\treedecomp$ with one child bag~$y$ for
            which $\beta(x) = \beta(y) \setminus \{v\}$ for some
            $v \in \beta(y)$. This bag is said to \emph{forget} $v$.
        \item[\upshape Join bag:]
            an internal vertex $x$ with two child vertices $l$ and $r$ with
            $\beta(x) = \beta(l) = \beta(r)$.
    \end{description}
    We additionally require that every arc is introduced exactly once. 
\end{definition}

Given a tree decomposition, a nice tree decomposition of equal width can be constructed in polynomial time~\cite{Kloks94, CyganNPPRW22}.
By fixing the root of $\treedecomp$, for two bags $x, y$ we say that $y$ is a descendant of $x$ if it is possible to reach $x$ when starting at $y$ and going only up the tree. 
We associate with each bag $x \in \treedecomp$ a vertex set $V_x \subseteq V$ where a vertex $v$ belongs to $V_x$ if and only if there is a bag $y$ which is a descendant of $x$ in $\treedecomp$ with $v \in \beta(y)$.
Note that $x$ is also a descendant of itself.
We also associate with each bag $x$ of $\treedecomp$ a subgraph of $G$ as follows:
\[ G_x = \Big{(}V_x, A_x = \{e |  \textrm{$e$ is introduced in a descendant of $x$ } \}\Big{)} \]
\paragraph*{Isolation Lemma.} 
\begin{definition}
    A function $\omega:U \rightarrow \mathbb{Z}$ \emph{isolates} a set family $\mathcal{F} \subseteq 2^U$ if there is a unique $S' \in \mathcal{F}$ with $\omega(S')= \min_{S \in \mathcal{F}}\omega(S)$. 
\end{definition}
This definition is useful because, in combination with the following lemma, it allows us to reduce the problem of deciding whether a set family is empty to computing the parity of its cardinality. Recall that for $X \subseteq U$, $\omega(X)$ denotes $\sum_{u \in X}\omega(u)$.

\begin{lemma}[Isolation Lemma~\cite{MulmuleyVV87}]\label{lem:iso}
    Let $\mathcal{F} \subseteq 2^U$ be a set family over a universe $U$ with $|\mathcal{F}|>0$.
    For each $u \in U$, choose a weight $\omega(u) \in \{1,2,\ldots,N\}$ uniformly and independently at random.
    Then $    	\Pr(\omega \textnormal{ isolates } \mathcal{F}) \geq 1 - \frac{|U|}{N}$.   
\end{lemma}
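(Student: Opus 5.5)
The plan is the standard union-bound argument. Call an element $u \in U$ \emph{singular} for $\omega$ if some minimum-weight set of $\mathcal{F}$ contains $u$ and some other minimum-weight set does not. We will show two things: $\omega$ fails to isolate $\mathcal{F}$ only if some $u$ is singular, and each fixed $u$ is singular with probability at most $1/N$. A union bound over the $|U|$ elements then gives failure probability at most $|U|/N$.

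\textbf{Step 1 (failure implies a singular element).} Suppose $\omega$ does not isolate $\mathcal{F}$. Since $|\mathcal{F}|>0$, the minimum weight is attained, so there are two distinct minimum-weight sets $S_1 \neq S_2$. Then some $u$ lies in their symmetric difference. That $u$ is contained in one minimum-weight set but not in the other, so it is singular.

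\textbf{Step 2 (a fixed element is rarely singular).} This is the key step. Fix $u$ and condition on the weights $\omega(v)$ for all $v \neq u$. Define
\[
  W^- = \min\{\omega(S) : S \in \mathcal{F},\ u \notin S\}, \qquad
  W^+ = \min\{\omega(S \setminus \{u\}) : S \in \mathcal{F},\ u \in S\},
\]
with the convention $\min \emptyset = +\infty$. Both quantities are fully determined by the conditioned weights and do not depend on $\omega(u)$.

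For a given value of $\omega(u)$, the minimum weight over sets avoiding $u$ is $W^-$, and over sets containing $u$ it is $W^+ + \omega(u)$. If $u$ is singular, both kinds of sets attain the global minimum, so $W^- = W^+ + \omega(u)$. If either of $W^-$ or $W^+$ is infinite, $u$ cannot be singular. Otherwise the equality forces $\omega(u) = W^- - W^+$, which is one specific value. Since $\omega(u)$ is uniform on $\{1,\ldots,N\}$ and independent of the other weights, it equals this value with probability at most $1/N$.

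Averaging over the conditioning gives $\Pr(u \text{ singular}) \le 1/N$. The only care needed is that $W^-$ and $W^+$ really are independent of $\omega(u)$, which holds because $W^+$ is defined using $S \setminus \{u\}$.

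\textbf{Step 3 (conclusion).} By Step 1 and a union bound over $u \in U$,
\[
  \Pr(\omega \text{ does not isolate } \mathcal{F}) \le \sum_{u \in U} \Pr(u \text{ singular}) \le \frac{|U|}{N},
\]
which is the claimed bound.
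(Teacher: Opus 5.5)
Your proof is correct. It is the standard singular-element argument of Mulmuley, Vazirani and Vazirani: a threshold $W^- - W^+$ that does not depend on $\omega(u)$, followed by a union bound. The paper does not reprove the lemma and simply relies on that cited argument, so your approach matches it.
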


\paragraph*{Characterization via In-branching and Out-branching.}

\begin{definition}
    Given a directed graph $D = (V, A)$ and a set of terminal vertices $T$, 
    a \emph{Steiner in-branching} (respectively, \emph{out-branching}) is a weakly connected subgraph $B = (V_{B}, A_{B})$ of $D$ that satisfies the following:
    \begin{itemize}
        \item $T \subseteq V_{B}$.
        \item There exists a unique vertex $r \in V_{B}$, called root, with out-degree (respectively, in-degree) $0$.
        \item Every other vertex $w \in V_{B} \setminus \{r\}$ has out-degree (respectively, in-degree) exactly $1$.
    \end{itemize}
    We further call a subgraph $\widetilde{B}$ a \emph{relaxed Steiner in-branching} (respectively, \emph{out-branching}) if it satisfies the above degree conditions but is not necessarily weakly connected.
\end{definition}

Given a directed graph $D = (V, A)$ and terminals $T \subseteq V$, 
we fix an arbitrary vertex $r \in T$. 
\begin{definition}\label{def:compatibility}
A Steiner in-branching $\Bin$ and a Steiner out-branching $\Bout$ are \emph{compatible} if they span the same vertex set, i.e., $V(\Brin) = V(\Brout)$, 
and are both rooted at $r$.
\end{definition}
We note that a solution of \textsc{SCSS} can always be decomposed to a pair of compatible Steiner in-branching and out-branching.

\subsection{Cut \& Count}
We first provide a high-level overview of our approach.
In general, we assign random weights over edges such that we can isolate a minimum-weight solution with constant probability.
Then we define a superset of solutions that can be locally computed, and we augment each candidate solution with consistent cuts to distinguish true solutions.
The locality property makes it possible to count the number of solutions modulo 2 efficiently, where isolation further plays a role to decide the existence of a solution.

We start with the random weight assignment for edges. 
Note that a feasible  SCSS solution $X \subseteq A$ can be represented by a pair $(\Bin, \Bout)$ of compatible Steiner in-branching and out-branching. 
We thus call such a pair $(\Bin, \Bout)$ a \emph{solution} in the following.

We further take $U = A \times \{\symI, \symO\}$ where $\symI, \symO$ are symbols
and assign each element $e \in U$ a weight $\omega(e)$ chosen uniformly and independently at random from $\{1,2,\ldots, N\}$ where $N = 2|U|$.
In other words, for each arc $(u,v)$ we have two random weights $\omega(u,v,\symI)$ and $\omega(u,v,\symO)$ such that if we have a solution $(\Bin, \Bout)$, then the weight of the solution is 
\[\omega(\Bin, \Bout) = \sum_{(u,v) \in \Bin} \omega(u,v,\symI)+\sum_{(u,v) \in \Bout} \omega(u,v,\symO).\]
Then for an integer $W \in \{1,2,\ldots,2tN\}$, let $\mathcal{S}_{W}$ be the set of solutions of weight $W$.
If there does not exist a solution, it is clear that for any weight $W$ we have $\mathcal{S}_{W} = \emptyset$.
However, if there is a solution, then by Isolation Lemma~\ref{lem:iso}, for some weight $W\in \{1,2,\ldots,2tN\}$, with probability of at least $1/2$, we have $|\mathcal{S}_{W}| = 1$. In particular, the size of $|\mathcal{S}_{W}|$ is odd, and thus the decision problem can be reduced to the problem of counting the number of solutions of weight $W$ modulo 2.
We first introduce a set $\mathcal{C}_{W}$ such that $|\mathcal{C}_{W}| \equiv |\mathcal{S}_{W}| \pmod{2}$.
Unlike $\mathcal{S}_{W}$, the cardinality of $|\mathcal{C}_{W}|$ can be counted efficiently in $2^{O(\tw(D))}|A|^{O(1)}$ arithmetic operations, and we provide the formal proof of this later in Section~\ref{sec:countPart}.

\paragraph*{The Cut Part.}
To define $\mathcal{C}_{W}$, we start with defining a different set of \emph{candidate solutions} $\mathcal{R}_W$ that is a superset of solutions.
Those candidate solutions are local, meaning that they are easy to control using standard dynamic programming techniques on tree decompositions.

Namely, we relax the global connectivity constraint for candidate solutions in \textsc{SCSS}, and define $\mathcal{R}_{W}$ to be the family of pairs $(\Brin, \Brout)$ where $\Brin, \Brout$ are compatible and relaxed Steiner in/out-branchings with weight $\omega(\Brin, \Brout) = W$.
Note that we have fixed the root $r$ from the beginning of the problem.
In other words, candidate solutions still contain all terminal vertices but are not guaranteed to be strongly connected.
Due to the property of locality, the cardinality of $\mathcal{R}_{W}$ can be efficiently counted, while the parity of $|\mathcal{R}_{W}|$ does not necessarily match $|\mathcal{S}_{W}|$. However, by combining a cut of candidate solutions, we are ready to give the definition of $\mathcal{C}_{W}$.

\begin{definition}[Consistent Cut]
    For a directed graph $D = (V, A)$,  a cut $V^{0} \sqcup V^{1} = V$ of $D$ is \emph{consistent} if no arcs from $A$ are between $V^{0}$ and $V^{1}$.
\end{definition}

We define $\mathcal{C}_{W}$ to be the family of tuples of $(\Brin, V_{\mathrm{in}}^{0}, V_{\mathrm{in}}^{1})$ and $ (\Brout, V_{\mathrm{out}}^{0}, V_{\mathrm{out}}^{1})$ where $(\Brin, \Brout) \in \mathcal{R}_{W}$, and $(\Vinl, \Voutl)$ and $(\Voutl, \Voutr)$ are consistent cuts of the relaxed branchings $\Brin$ and $\Brout$ respectively. Formally, $\mathcal{C}_{W}$ is the following set:
\begin{equation*}
\left\{ 
    \begin{aligned} 
         (&\Brin, &&\Vinl, &&\Vinr, \\ 
          &\Brout, &&\Voutl, &&\Voutr) 
    \end{aligned} 
    \;\middle|\;
    \begin{array}{cl}
        \bullet & (\Brin, \Brout) \in \mathcal{R}_{W} \\[0.5em]
        \bullet & (\Vinl, \Vinr), (\Voutl, \Voutr) \text{ are consistent cuts respectively}\\[0.5em]
        \bullet & r \in \Vinl, \Voutl
    \end{array}
\right\}.
\end{equation*}
We further require root vertex $r \in \Vinl, \Voutl$ to break the symmetry as we shall explain later.

\paragraph*{The Count Part.}
We show that in $\mathcal{R}_{W}$ each solution of the problem is consistent with \emph{exactly one} cut, whereas all other candidate solutions are consistent with an \emph{even number} of cuts.

\begin{lemma}\label{lem:cutNum}
    Let $(\Brin, \Brout)$ be a candidate solution, the number of pairs of $(\Vinl, \Vinr)$ and $(\Voutl, \Voutr)$ such that $(\Vinl, \Vinr)$ is a consistent cut of $\Brin$ and $(\Voutl, \Voutr)$ is a consistent cut of $\Brout$ and $r \in \Vinl, \Voutl$ is $2^{(\cc(\Brin)-1)+(\cc(\Brout)-1)}$ where $\cc(D)$ denotes the number of connected components of $D$.
\end{lemma}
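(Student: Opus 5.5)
The two cuts are chosen independently of each other. The conditions on $(\Vinl,\Vinr)$ involve only $\Brin$, and the conditions on $(\Voutl,\Voutr)$ involve only $\Brout$. The root condition also splits into $r\in\Vinl$ and $r\in\Voutl$. So the number of valid pairs is the product of two counts, one per branching. It therefore suffices to prove a single-branching claim: for a relaxed branching $B$ with vertex set $V_B$ containing $r$, the number of consistent cuts $V^0\sqcup V^1=V_B$ of $B$ with $r\in V^0$ is $2^{\cc(B)-1}$. Multiplying the two counts then gives the exponent $(\cc(\Brin)-1)+(\cc(\Brout)-1)$.

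Here a cut is understood as a partition of the vertex set of the relaxed branching, and $\cc$ counts weakly connected components, i.e.\ components of the underlying undirected graph. To prove the claim, first show that a partition of $V_B$ is a consistent cut exactly when every weakly connected component of $B$ lies entirely inside $V^0$ or entirely inside $V^1$.

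For the forward direction, take a component $C$ that meets both sides. Because $C$ is connected in the underlying undirected graph, some undirected path in $C$ goes from a vertex of $V^0$ to a vertex of $V^1$. Some arc on this path has its endpoints on different sides, which is a crossing arc and contradicts consistency. The converse holds because every arc has both endpoints in the same component, so no arc can cross the cut when each component lies on one side.

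Consistent cuts therefore correspond bijectively to assignments of a side in $\{0,1\}$ to each component. The constraint $r\in V^0$ fixes the side of the component containing $r$, and the other $\cc(B)-1$ components choose freely, giving $2^{\cc(B)-1}$ cuts.

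The only real subtlety is bookkeeping. One must make sure the cuts are taken over $V(\Brin)=V(\Brout)$, which is the same set by compatibility, and not over all of $V$. Vertices outside the candidate solution would otherwise each contribute an extra factor of $2$. I would state this convention explicitly; if cuts were instead meant over all of $V$, those unused vertices would have to be counted as isolated components. No other step is an obstacle.
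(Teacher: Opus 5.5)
Your proposal is correct and follows the same argument as the paper: each weakly connected component must lie entirely on one side of a consistent cut, the component containing $r$ has its side fixed, the other components choose their sides freely, and the two branchings contribute independent factors. Your explicit check of the component characterization, and your remark that the cuts partition $V(\Brin)=V(\Brout)$ rather than all of $V$, only make precise what the paper leaves implicit.
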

\begin{proof}
    For any connected component $C$ in $\Brin$, 
    because the cut is consistent,
    % by the definition of the consistent cut $(\Vinl, \Vinr)$, 
    either $C \subseteq \Vinl$ or $C \subseteq \Vinr$.
    Since we fix the choice for the component that contains the root $r$, and for the rest $\cc(\Brin) - 1$ components we can freely choose the side of the cut, we have $2^{\cc(\Brin) - 1}$ many consistent cuts.
    A similar analysis applies to $\Brout$, and in total this gives the lemma.
\end{proof}

We note that given a candidate solution $(\Brin, \Brout)$, if $\cc(\Brin) = \cc(\Brout) = 1$, then it is also a valid solution by the definition of relaxed Steiner branchings and compatibility.
Thus from Lemma~\ref{lem:cutNum}, it is clear that
\begin{equation}\label{eq:cutcount}
|\mathcal{C}_{W}| \equiv |\{(\Brin,\Brout) \in \mathcal{R}_{W} \mid \cc(\Brin) = \cc(\Brout) = 1\}| = |\mathcal{S}_{W}| \pmod{2}.
\end{equation}

Finally, we use dynamic programming on tree decomposition to count the cardinality of $\mathcal{C}_{W}$ modulo 2 efficiently.
\begin{lemma}\label{lem:cutCountAlgo}
    Given $D = (V, A)$, set of terminals $T$, and integer $t$, $\omega: A \times \{\symI, \symO\} \rightarrow \{1,2,\ldots,N\}$ and a nice tree decomposition $\mathbb{T}$ of width $\tw$, there exists an algorithm that can determine $|\mathcal{C}_{W}|$ modulo 2 for every $0 \le W \le 2tN$ in $17^{\tw}N^{2}|A|^{O(1)}$ time.
\end{lemma}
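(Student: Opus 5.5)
We count $|\mathcal{C}_W| \bmod 2$ by dynamic programming over the nice tree decomposition $\mathbb{T}$. The key is to pick vertex states so that each vertex has exactly $17$ of them. This gives $17^{\tw+1}$ states per bag, times the weight index $W\le 2tN$ and the arc count.

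\textbf{States.} For a vertex $v$ we record the following.
\begin{itemize}
\item Whether $v$ lies in the common vertex set $V(\Brin)=V(\Brout)$. Compatibility lets one bit serve both branchings.
\item If $v$ is not in the set, a single state. This is forbidden when $v\in T$.
\item If $v$ is in the set, its side in the in-cut ($\Vinl$ or $\Vinr$) and its side in the out-cut ($\Voutl$ or $\Voutr$). This gives $4$ combinations, with $r$ forced to side $0$ in both.
\item Whether the unique outgoing arc of $v$ in $\Brin$ has already been chosen, where we require $v\neq r$.
\item Whether the unique incoming arc of $v$ in $\Brout$ has already been chosen, again for $v\neq r$.
\end{itemize}
This gives $1+4\cdot 2\cdot 2=17$ states.

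For $r$ itself the degree flags are fixed, since $r$ has out-degree $0$ in $\Brin$ and in-degree $0$ in $\Brout$. Its state count is therefore only smaller.

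\textbf{Table.} For each bag $x$, state assignment $f:\beta(x)\to[17]$, and weight $w$, the entry $A_x(f,w)$ is the number, modulo $2$, of partial objects in $G_x$ satisfying all of the following.
\begin{itemize}
\item Forgotten vertices satisfy all constraints.
\item Every chosen arc has both endpoints in the set.
\item Every arc chosen into $\Brin$ or $\Brout$ has endpoints on the same side of the corresponding cut. This enforces consistency.
\item The degree flags match the arcs chosen so far.
\item The total weight, counting $\omega(\cdot,\symI)$ and $\omega(\cdot,\symO)$ separately, equals $w$.
\end{itemize}

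\textbf{Transitions.}
\begin{itemize}
\item \emph{Leaf bag.} Empty assignment with weight $0$ gets count $1$.
\item \emph{Introduce vertex $v$.} Copy the child entry for every state of $v$ whose flags are ``not yet chosen''. Forbid the ``not in set'' state if $v\in T$, and force side $0$ if $v=r$.
\item \emph{Introduce arc $(u,v)$.} There are four options, since the arc may or may not be used in $\Brin$ and independently may or may not be used in $\Brout$.
  \begin{itemize}
  \item Using it in $\Brin$ requires that $u,v$ be in the set, that $u\ne r$, and that $u,v$ have equal in-sides. It flips $u$'s out-flag from ``not chosen'' to ``chosen'' and adds $\omega(u,v,\symI)$. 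The arc is introduced exactly once, so $u$ can acquire at most one out-arc.
  \item Using it in $\Brout$ is symmetric: it flips $v$'s in-flag and adds $\omega(u,v,\symO)$.
  \end{itemize}
\item \emph{Forget $v$.} Sum over the states of $v$ that are either ``not in set'', or ``in set'' with both flags ``chosen'' (or $v=r$).
\item \emph{Root.} The answer is $A_z(\emptyset,W)$.
\end{itemize}

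Correctness follows in the standard way. Each tuple of $\mathcal{C}_W$ has a unique consistent history in the tree, because the arc sets are disjoint across subtrees and each arc is introduced exactly once.

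\textbf{Main obstacle: the join bag.} At a join, the set membership and the cut sides must agree between the two children. The degree flags, however, combine by ``or'' with the constraint that not both children have the flag set. This is a subset-sum-like combination on two binary coordinates per vertex.

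Naively we would pair child states, which costs about $\prod(1+4\cdot 3\cdot 3)$ combinations, far more than $17^{\tw}$. Instead we use the standard fast join from Cut\&Count (cf.~\cite{CyganNPPRW22}). For each vertex we change the flag coordinates into the ``zeta'' basis: state ``$\le$ chosen'', meaning chosen or not chosen, versus ``not chosen''. In this basis the disjoint-or combination becomes a pointwise product, and the transform is invertible.

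We must also handle the disjointness requirement, since both children must not pick the arc for the same vertex. Working over the integers rather than modulo $2$, we track the total number of set flags as an extra index and keep only the terms where the flag count adds up. This adds a polynomial factor.

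Equivalently, we apply a per-coordinate transform on the $\{0,1\}$ flag, $\hat A(\ldots,1)=A(\ldots,0)+A(\ldots,1)$. The product of transforms then counts pairs with multiplicities, and the extra flag-count index removes the overlapping pairs. The transform takes $17^{\tw}\cdot\tw$ operations. The weight convolution costs an $N^2$-type factor, which matches the $N^2$ in the bound.

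Since all counts are bounded by $2^{|A|^{O(1)}}$, arithmetic is polynomial. The total running time is $17^{\tw}N^2|A|^{O(1)}$ as claimed.
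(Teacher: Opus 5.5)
Your proposal is correct and follows the paper's proof. It uses the same 17 states per vertex ($\NULL$, or two degree flags $\sI,\sO$ plus two cut sides) and the same leaf, introduce, forget and arc transitions. Its join requires membership and cut sides to agree and combines the degree flags by ranked subset convolution, which is the paper's subset convolution over $\beta'(x)\times\{\symI,\symO\}$ for each fixed $(c_1,c_2)$, giving $\sum_{q'}\binom{q}{q'}16^{q'}=17^{q}$.

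The only discrepancy is that your table $A_x(f,w)$ drops the arc-count index you mention at the start. This is the paper's $i=|E(\Brin\cup\Brout)|$, incremented once even when an arc lies in both branchings. Dropping it is harmless for $\mathcal{C}_W$ as literally defined, but the index is what enforces $|X|\le t$ downstream, and adding it costs only a polynomial factor.
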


Equipped with Lemma~\ref{lem:cutCountAlgo} and~\eqref{eq:cutcount}, we can compute the parity of $|\mathcal{S}_W|$ for each $W=1,\ldots,2tN$ and combine with Lemma~\ref{lem:iso} to obtain Theorem~\ref{tw:main}.

\subsection{Proof of Lemma~\ref{lem:cutCountAlgo}}\label{sec:countPart}

To use dynamic programming, we first define ``partial solutions'' for the tree decomposition.
Recall that for a bag $x \in \treedecomp$,
we denote by $\beta(x) \subseteq V(D)$ the set of vertices that corresponds to $x$.
And we denote by $V_{x}$ the set of vertices in bags of all descendants of $x$, while by $D_{x}$ we denote the graph composed of vertices $V_{x}$ and the arcs $A_{x}$ introduced by the descendants of $x$.
For terminals, we let $T_{x} \subseteq V_{x}$ denote terminal vertices restricted to $V_{x}$.

For every bag $x \in \treedecomp$ of the tree decomposition, we consider partial solutions in the subgraph $D_{x}$.
\begin{definition}
    Given a bag $x \in \treedecomp$, let $D_{x} = (V_{x}, A_{x})$ with $\beta(x) \subseteq V_{x}$ and $T_{x} \subseteq V_{x}$.
    A subgraph $B \subseteq D_{x}$ is called a \emph{partial relaxed Steiner in(out)-branching} if it satisfies all conditions of a relaxed Steiner in(out)-branching, except for vertices in $\beta(x)$. Formally,
    \begin{itemize}
        \item $T_{x} \subseteq V(B)$.
        \item $\outdeg_{B}(r) = 0$ ($\indeg_{B}(r) = 0$), if $r \in V_{x}$.
        \item $\outdeg_{B}(u) \le 1$ ($\indeg_{B}(u) \le 1$), if $u \neq r \in \beta(x)$, o.w. $\outdeg_{B}(u) = 1$ ($\indeg_{B}(u) = 1$).
    \end{itemize}
\end{definition}

We note that the definition of compatibility naturally follows from Definition~\ref{def:compatibility} for partial relaxed Steiner in(out)-branchings.
Then for every bag $x \in \treedecomp$, integers $0 \le i \le |A|$, $0 \le w \le 2N|A|$, we define $\mathcal{R}_{x}(i, w)$ to be the following set:
\begin{equation*}
\left\{ (\Brin, \Brout) \;\middle|\;
    \begin{array}{cl}
        \bullet & \Brin, \Brout \text{ are partial relaxed Steiner in/out-branchings in } D_{x}\\[0.5em]
        \bullet &  \Brin, \Brout \text{ are compatible}\\[0.5em]
        \bullet & |E(\Brin \cup \Brout)| = i \\[0.5em]
        \bullet & \omega(\Brin, \Brout) = w
    \end{array}
\right\}.
\end{equation*}

The intuition for the set $\mathcal{R}_{x}(i,w)$ is that it contains all pairs of subgraphs of $D_{x}$ that could potentially be extended to a candidate solution in $\mathcal{R} = \bigcup_{W} \mathcal{R}_{W}$ with additional restrictions on cardinality and weight.
We define $\mathcal{C}_{x}(i, w)$ similarly by combining consistent cuts to be the following set:
\begin{equation*}
\left\{ 
    \begin{aligned} 
         (&\Brin, &&\Vinl, &&\Vinr, \\ 
          &\Brout, &&\Voutl, &&\Voutr) 
    \end{aligned} 
    \;\middle|\;
    \begin{array}{cl}
        \bullet & (\Brin, \Brout) \in \mathcal{R}_{x}(i, w) \\[0.5em]
        \bullet & (\Vinl, \Vinr), (\Voutl, \Voutr) \text{ are consistent cuts respectively}\\[0.5em]
        \bullet & r \in \Vinl, \Voutl, \text{ if }r \in V_{x}
    \end{array}
\right\}.
\end{equation*}
Further let $\mathcal{L} = \{\bfI, \bfO, \bfVin, \bfVout\}$ be the set of index labels, for vertex bag $\beta(x)$ we define a state as a function $s: \beta(x) \rightarrow \mathcal{X} = \{0, 1\}^{\mathcal{L}} \cup \{\NULL\}$ such that for any $u \in \beta(x)$ if $s(u) \neq \NULL$, then:
\[s(u) = (\sI(u), \sO(u), \sVin(u), \sVout(u)).\]

The various choices of $s$ actually describe all possible configurations of intersections of an element in $\mathcal{C} = \bigcup_{W}\mathcal{C}_{W}$, that is a valid solution with consistent cuts, with $\beta(x)$.
We thus further decompose each $\mathcal{C}_{x}(i,w)$ and define $\mathcal{A}_{x}(i, w, s)$ to be
\begin{equation*}
\left\{ 
    \begin{aligned} 
         (&\Brin, &&\Vinl, &&\Vinr, \\ 
          &\Brout, &&\Voutl, &&\Voutr) 
    \end{aligned} 
    \;\middle|\;
    \begin{array}{cl}
        \bullet & (\Brin, \Vinl, \Vinr, \Brout, \Voutl, \Voutr) \in \mathcal{C}_{x}(i, w) \\[0.5em]
        \bullet & s \text{ encodes vertex state}:\\[0.5em]
                & \quad \begin{aligned}
                    & \text{- If } u \notin V(\Brin) = V(\Brout), s(u) = \NULL\\
                    & \text{- o.w. }\\
                    & \quad \begin{aligned}
                        & \text{- } \indeg_{\Brout}(u) = \sI(u)\\
                        & \text{- } \outdeg_{\Brin}(u) = \sO(u)\\
                        & \text{- If } \sVin(u) = 0, u \in \Vinl, \text{ o.w. } u \in \Vinr\\
                        & \text{- If } \sVout(u) = 0, u \in \Voutl, \text{ o.w. } u \in \Voutr
                    \end{aligned}\\
                  \end{aligned} \\
                
    \end{array}
\right\}.
\end{equation*}
We let $DP_{x}(i, w, s) = |\mathcal{A}_{x}(i, w, s)|$ and note that to compute the value of $|\mathcal{C}_{W}|$ modulo 2, it suffices to compute the value of $DP_{r'}(t, W, \emptyset)$ for all $W$ where $t$ is the input integer and $r' \in \treedecomp$ is the root for the tree decomposition.

We now give the recurrence for $DP_{x}(i, w, s)$ for the dynamic programming algorithm.
For each bag $x \in \treedecomp$, we denote by $y, z$ the left and right child of $x$ if present.
All values of $DP_{x}(i, w, s)$ are set to 0 initially.
\paragraph*{Leaf Vertex Bag.}
\[DP_{x}(0, 0, \emptyset) = 1.\]

\paragraph*{Introduce Vertex Bag.}
Given $\beta(x) = \beta(y) \cup \{u\}$ where $u \in V$ is the introduced vertex, 
let $e$ be an element from the label set $\mathcal{X}$,
we define $s' = s[u \rightarrow e]$ as states update such that $s'(u) = e$ while $s'(v) = s(v)$ for any other $v \in \beta(y)$.
If $u$ is in partial solutions, then
\[\forall a, b \in \{0, 1\}, DP_{x}(i, w, s[u \rightarrow (0, 0, a, b)]) = DP_{y}(i, w, s)\]
If $u \notin T$, it is also possible to ignore $u$ in the partial solution, then
\[DP_{x}(i, w, s[u \rightarrow \NULL]) = DP_{y}(i,w,s).\]

\paragraph*{Forget Vertex Bag.}
Given $\beta(x) = \beta(y) \setminus \{u\}$ where $u$ is the forgotten vertex, let states update similarly defined as above.
If the vertex is a terminal but not a root, then it has to be covered by both branchings, giving the in-degree and out-degree both as 1,
\[DP_{x}(i, w, s) = \sum_{a, b} DP_{y}(i, w, s[u \rightarrow (1, 1, a, b)]).\]
Otherwise, if $u$ is indeed the root vertex $r$, then it is supposed to have degree 0 in both directions,
\[DP_{x}(i, w, s) = \sum_{a, b} DP_{y}(i, w, s[r \rightarrow (0, 0, a, b)]).\]
For $u \notin T$, we consider one extra case where $u$ is not included in partial solution in $D_{y}$,
\[DP_{x}(i, w, s) = \sum_{a, b} DP_{y}(i, w, s[u \rightarrow (1, 1, a, b)]) + DP_{y}(i, w, s[u \rightarrow \NULL]).\]

\paragraph*{Introduce Arc Bag.}
We consider the case where a single arc is introduced, and let $\beta(x) = \beta(y)$ and $E_{x} = E_{y} \cup (u, v)$.
Let $\mathbf{X} \in \mathcal{L}$ and $a \in \{0, 1\}$, we define another form of states update in a more subtle level as $s' = s[s_{\mathbf{X}}(u) \rightarrow a]$ such that $s'_{\mathbf{X}}(u) = a$ while all other values remain unchanged from $s$.
For the partial relaxed in-branching and out-branching, it is independent to decide whether to include the arc $(u,v)$. 
This yields four distinct cases for the transition, allowing us to express the recurrence as:
\[DP_{x}(i, w, s) = DP_{y}(i, w, s) + \Phi_{\Brin}^{uv} + \Phi_{\Brout}^{uv} + \Phi_{\Brin \wedge \Brout}^{uv}.\]
Here the base term $DP_{y}(i, w, s)$ captures the case where $(u,v)$ is used in neither branching.
The remaining terms $\Phi_{\Brin}^{uv}$, $\Phi_{\Brout}^{uv}$ and $\Phi_{\Brin \wedge \Brout}^{uv}$ represent the cases where $(u,v)$ is included only in the in-branching, only in the out-branching, or in both, respectively. We detail each of these terms below.
\begin{enumerate}
    \item To include the arc $(u, v)$ for $\Brin$ in $D_{x}$, the out-degree for vertex $u$ should be zero in $D_{y}$, and $u, v$ should stay on the same side of the cut. 
    Further we set the weight constraint to enforce that only $\Brin$ includes $(u, v)$.
    Thus we have
    \[\Phi_{\Brin}^{uv} = [s_{\bfO}(u) = 1 \wedge s_{\bfVin}(u) = s_{\bfVin}(v)] DP_{y}(k-1, w-\omega(u, v, \symI), s[s_{\bfO}(u) \rightarrow 0]).\]
    Here $[s_{\bfO}(u) = 1 \wedge s_{\bfVin}(u) = s_{\bfVin}(v)]$ is an indicator function where the value is 1 if all conditions are met, o.w. 0.
    We note that by having $s_{\bfVin}(u) = s_{\bfVin}(v)$, we implicitly exclude the possibility of $s(u)$ or $s(v)$ is $\NULL$.
    \item  The analysis for the case that only $\Brout$ includes $(u,v)$ is similar, and we have
    \[\Phi_{\Brout}^{uv} = [s_{\bfI}(v) = 1 \wedge s_{\bfVout}(u) = s_{\bfVout}(v)] DP_{y}(k-1, w-\omega(u, v, \symO), s[s_{\bfI}(v) \rightarrow 0]).\]
    \item If $(u, v)$ is utilized by both branchings, we combine above analysis to have
    \begin{align*}
    \Phi_{\Brin \wedge \Brout}^{uv} = & [s_{\bfO}(u) = 1 \wedge s_{\bfI}(v) = 1 \wedge s_{\bfVin}(u) = s_{\bfVin}(v) \wedge s_{\bfVout}(u) = s_{\bfVout}(v)] *\\
    & DP_{y}(k-1, w - \omega(u, v, \symI) - \omega(u, v, \symO), s[s_{\bfO}(u) \rightarrow 0, s_{\bfI}(v) \rightarrow 0])
\end{align*}
\end{enumerate}

We note that a special case in the recurrence is that one of the end points of the arc $(u, v)$ is the root vertex $r$ for branchings.
W.l.o.g., we assume $u = r$, then the arc $(r, v)$ can only be part of the out-branching. The recurrence is as follows:
\begin{align*}
    DP_{x}(i, w, s) & =  DP_{y}(i, w, s) + \Phi_{\Brout}^{rv}\\
    \Phi_{\Brout}^{rv} & = [s_{\bfI}(v) = 1 \wedge s_{\bfVout}(r) = s_{\bfVout}(v)] DP_{y}(k-1, w-\omega(r, v, \symO), s[s_{\bfI}(v) \rightarrow 0]).
\end{align*}
The analysis for the case where $v$ is the root vertex is similar.

\paragraph*{Join Bag.} 
Let $\beta(x) = \beta(y) = \beta(z)$ and $q = |\beta(x)|$. For states $s_1, s_2, s \in \mathcal{S}^{\beta(x)}$ where $\mathcal{S}$ is the set of all labels,
we say $s_1 \oplus s_2 = s$ if for each vertex $u \in \beta(x)$, the value of $s_1(u)$ and $s_2(u)$ can be combined into $s(u)$ under rules representing that partial solutions from two child bags can be unioned together, and we will explain the rules later in detail.
We have a general recurrence formula as:
\[DP_{x}(i,w,s) = \sum_{i_{1} + i_{2} = i} \sum_{w_{1} + w_{2} = w} \sum_{s_{1} \oplus s_{2} = s} DP_{y}(i_1, w_1, s_1)\cdot DP_{z}(i_2, w_{2}, s_{2}).\]
However, a straightforward computation of the above formula gives $289^{k}|A|^{O(1)}$ time complexity. 
We show that this bottleneck can be improved via fast subset convolution.

To apply subset convolution, we first decompose the state-combining rules.
We note that if $s_{1}$ and $s_{2}$ can be combined, for each vertex $u \in \beta(x)$, both states have to agree on the existence of $u$, i.e., 
\[s_{1}(u)=\NULL \Leftrightarrow s_{2}(u) = \NULL.\]
We introduce sub-state $c_{1}: \beta(x) \rightarrow \{0, 1\}$ to encode the existence of vertices, and for each fixed $c_{1}$ we only consider the state $s$ coincides with $c_{1}$. Namely, $c_{1}(u) = 0 \Leftrightarrow s(u) = \NULL$.
We focus on vertices involved and let $\beta'(x) := \{u \in \beta(x) \mid s(u) \neq \NULL\}$, and $q' = |\beta'(x)|$.
% and $s'$ be $s$ restricted to the set $\beta'(x)$.

We further require that states $s_{1}$ and $s_{2}$ agree on the cut information for any active vertex $u$ if they are to be combined.
Formally, this gives:
\[s^{1}_{\bfVin}(u) = s^{2}_{\bfVin}(u) \wedge s^{1}_{\bfVout}(u) = s^{2}_{\bfVout}(u).\]
We introduce another sub-state $c_{2}: \beta'(x) \rightarrow \{0,1\}^{\{\bfVin,\bfVout\}}$, and we are going to compute $DP_{x}(i, w, s)$ for all values such that $s$ agrees with $c_{1}$ and $c_{2}$,
i.e., for each vertex $u \in \beta(x)$, if $c_1(u) = 0$, then $s(u) = \NULL$, otherwise $s_{\bfVin}(u) = c^{2}_{\bfVin}(u)$ and $s_{\bfVout}(u) = c^{2}_{\bfVout}(u)$.
With this assumption, the combining table for $s$ is simplified since we only need to consider the value of $(s_{\bfI}(u), s_{\bfO}(u))$.
Table~\ref{tab:combineTable} shows how two individual states of a vertex $u$ in $y$ and $z$ combine to a state of $x$. XX indicates that two states do not combine. See e.g.~\cite{Rooij20} for more information about convolution tables.

\begin{table}[h]
    \centering
    \begin{tabular}{cccccc}
          $(s_{\bfI}(u), s_{\bfO}(u))$ &(0,0) &(0,1)&(1,0) &(1,1) \\
         (0,0)&  (0,0) & (0,1)  &(1,0)  & (1,1) \\
         (0,1)&  (0,1) &  XX    & (1,1) &  XX  \\
         (1,0)&  (1,0) & (1,1)  & XX    &   XX  \\
         (1,1)&  (1,1) & XX     & XX    & XX    \\
    \end{tabular}
    \caption{Degree state convolution table.}
    \label{tab:combineTable}
\end{table}
The structure of this table shows that combinations are only allowed when the non-zero degree requirements are disjoint.
It indicates that for fixed $i_{1} + {i_2} = i$, $w_{1} + w_{2} = w$ and sub-state $c_{1}$,$c_{2}$, computing $\sum_{s_{1} \oplus s_{2} = s} DP_{y}(i_{1}, w_{1}, s_{1}) \cdot DP_{z}(i_{2}, w_{2}, s_{2})$ where $s$ agrees with $c_{1}$ and $c_{2}$ is equivalent to performing a subset convolution over the set universe $S' = \beta'(x) \times \{\symI, \symO\}$.

To see that, consider a bijection between the valid states $s$ and the power set of $S'$.
For any state $s$, let $\chi_s \subseteq S'$ be the subset defined as
\[\chi_s = \{ (u, \symI) \mid u\in \beta'(x) \wedge s_{\bfI}(u) = 1 \} \cup \{ (u, \symO) \mid u\in \beta'(x) \wedge s_{\bfO}(u) = 1 \}.\]
From Table~\ref{tab:combineTable}, it can be verified that two states $s_{1}$ and $s_{2}$ can be validly combined if and only if $\chi_{s_{1}} \cap \chi_{s_{1}} = \emptyset$ and their union forms combined states as $\chi_{s_{1}} \cup \chi_{s_{1}} = \chi_{s_{1} \oplus s_{2}}$.
Since valid combinations correspond exactly to disjoint unions of subsets in $S'$, we can efficiently compute the summation using fast subset convolution.

Recall that $q = |\beta(x)|$ and $q' = |\beta'(x)|$,
using fast subset convolution, this can be computed in time $4^{q'}q'^{O(1)}$.
Further for a fixed $c_{1}$, there are $4^{q'}$ choices for $c_{2}$, and there are $\binom{q}{q'}$ ways to choose the $q'$ active vertices for $c_{1}$.
Considering $i \le t \le |A|$ and $w \le 2tN$, the total running time is upper bounded by,
\[\left(\sum_{0 \le q' \le q} \binom{q}{q'} 16^{q'}q'^{O(1)}\right)\cdot N^2|A|^{O(1)} \le 17^{\tw}N^2|A|^{O(1)}=O^*(17^{\tw}) .\]

\section{Exact Exponential Time Algorithms}\label{sec:excexp}
In this section, we describe our exact exponential time algorithm for SCSS based on a dynamic programming approach.
\subsection{Preliminaries}

We use the following algebraic tool in our computation, which we proceed to define.

\begin{definition}[(min,+) subset convolution]
For set functions $f,g : 2^V \to \NN \cup \{+\infty\}$, the (min,+) subset convolution $h=f * g$ is the set function $h:2^V \to \NN \cup \{+\infty\}$ defined for every $S\subseteq V$ by 
\[
  (f * g)[S] \;:=\; \min_{S = A \sqcup B} f[A] + g[B].
\]
\end{definition}
The (min,+) subset convolution can be computed efficiently (see Theorem~10.17 in~\cite{CyganFKLMPPS15}); We restate this result here for completeness.
\begin{lemma}[Fast evaluation of (min,+) subset convolution]
Let $n = |V|$ and $M \in n^{O(1)}$.
Given $f,g:2^V\to \{ 0, \dots, M \}$, the values $(f * g)[S]$ for all $S\subseteq V$ can be computed in $O^*(2^{n})$ time and $O^*(2^{n})$ space.
\end{lemma}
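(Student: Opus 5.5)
The statement is the fast (min,+) subset convolution lemma: for $f,g:2^V\to\{0,\dots,M\}$ with $M\in n^{O(1)}$, all values $(f*g)[S]$ can be computed in $O^*(2^n)$ time. The plan is to reduce to ordinary $(+,\times)$ subset convolution over polynomials in a formal variable. Then we apply the ranked zeta/Möbius transform technique of Björklund, Husfeldt, Kaski and Koivisto~\cite{BjorklundHKK07}, and read off the minimum as the lowest-degree nonzero coefficient.

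\emph{Encoding.} Because the values are bounded by $M$, we encode each function as a polynomial-valued set function. Set $\hat f[A] := x^{f[A]}$ and $\hat g[B] := x^{g[B]}$, viewed as elements of $\mathbb{Z}[x]$ of degree at most $M$. Then
\[
(\hat f \ast_{+} \hat g)[S] = \sum_{S=A\sqcup B} x^{f[A]+g[B]},
\]
where $\ast_+$ is the ordinary sum-product subset convolution. This polynomial has nonnegative integer coefficients, each bounded by $2^n$. Its lowest-degree monomial with nonzero coefficient has degree exactly $\min_{S=A\sqcup B} f[A]+g[B] = (f*g)[S]$, since no cancellation can occur. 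If some values equal $+\infty$, we simply treat them as the zero polynomial.

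\emph{Sum-product convolution.} The next step computes the ordinary subset convolution of two functions $2^V\to R$ in $O^*(2^n)$ ring operations, for the ring $R=\mathbb{Z}[x]_{\le 2M}$. The procedure has three steps:
\begin{enumerate}
    \item Rank by cardinality: let $\hat f_k[A] = \hat f[A]\cdot[|A|=k]$, and define $\hat g_k$ analogously.
    \item Apply the zeta transform $\zeta h[S]=\sum_{A\subseteq S} h[A]$ to every $\hat f_k$ and $\hat g_k$, using Yates' algorithm in $O(2^n n)$ operations each.
    \item Form $\sum_{k+\ell=j}\zeta \hat f_k[S]\,\zeta \hat g_\ell[S]$ pointwise, apply the Möbius transform, and evaluate at $S$ with $j=|S|$.
\end{enumerate}
The correctness argument is standard. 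After Möbius inversion, the pointwise product counts pairs $(A,B)$ with $A\cup B = S$ and $|A|+|B|=|S|$, and these conditions force $A\cap B=\emptyset$. The total cost is $O(2^n n^{3})$ ring operations.

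\emph{Cost accounting.} Each ring element is a polynomial of degree at most $2M = n^{O(1)}$. Its coefficients are integers of absolute value at most $2^{O(n)}$, because intermediate zeta sums count at most $2^{2n}$ pairs, so they have $O(n)$ bits. Each ring operation therefore costs $\mathrm{poly}(n)$ time, and the total is $O^*(2^n)$ time and space.

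This bit-size bookkeeping is the only point needing care, and it is the reason the hypothesis $M\in n^{O(1)}$ is required. Alternatively, one can avoid the polynomials entirely by computing, for each target value $v\le 2M$, the counts restricted to $f[A]+g[B]=v$. This is the same computation with the $x$-degree unrolled, and it gives the same bound. Since the paper says the lemma is restated from Theorem~10.17 of~\cite{CyganFKLMPPS15}, it suffices to cite that theorem; the argument above is the proof behind it.
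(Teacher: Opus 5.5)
Your proof is correct and is essentially the standard argument behind Theorem~10.17 of~\cite{CyganFKLMPPS15}, which the paper cites instead of proving the lemma: put each value in an exponent, compute an ordinary subset convolution with the ranked zeta/M\"obius transforms, and read off the smallest exponent with a nonzero coefficient. The only cosmetic difference is that the book replaces your formal variable $x$ with an integer base $\beta > 2^n$ (Kronecker-style encoding $\beta^{f(X)}$), which works because every final coefficient is at most $2^n < \beta$.
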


We also use the following characterization of strongly connected graphs,

\begin{definition}[Directed ear]
Let $H$ be a subdigraph of a digraph $D$.
A \emph{directed ear} $P$ with respect to $H$ is either a directed path in $D$ whose endpoints lie in $V(H)$ and whose internal vertices are outside $H$, or a directed cycle in $D$ that intersects $H$ in exactly one vertex.
Its \emph{internal vertex set} is $V^{\circ}(P):=V(P)\setminus V(H)$.
\end{definition}

\begin{definition}[Ear decomposition]
An \emph{ear decomposition} of a digraph $D$ is a sequence
\[
  \mathcal{P}=(P_1,P_2,\dots,P_r)
\]
of subdigraphs where $P_1$ is a directed cycle, and for each $i\ge 2$, $P_i$ is a directed ear with respect to $H_{i-1}:=P_1\cup\cdots\cup P_{i-1}$.
\end{definition}

\begin{lemma}[Characterization of strong connectivity, Lemma 5.3.2 in~\cite{BangGutinbook}]\label{lem:EDinStronglyConnected}
A digraph $D$ with at least two vertices is strongly connected if and only if it admits an ear decomposition.
\end{lemma}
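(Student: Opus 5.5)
We prove both directions of Lemma~\ref{lem:EDinStronglyConnected} directly, in the spirit of the classical undirected ear-decomposition argument.

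\emph{Sufficiency.} Suppose $D$ admits an ear decomposition $(P_1,\dots,P_r)$ whose union is $D$; the statement is about $D$ itself being decomposed, so we read it as $H_r=D$. We show by induction on $i$ that $H_i$ is strongly connected. The base case holds because $P_1$ is a directed cycle. For the step, let $P_i$ be a path from $a$ to $b$ with $a,b\in V(H_{i-1})$, or a cycle through a single vertex $a=b$ of $H_{i-1}$. Take any internal vertex $x$ of $P_i$. Then $x$ reaches $b$ along $P_i$, and $b$ reaches every vertex of $H_{i-1}$ by induction. Symmetrically, every vertex of $H_{i-1}$ reaches $a$, and $a$ reaches $x$ along $P_i$. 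So all vertices of $H_i$ are mutually reachable. Adding arcs never destroys strong connectivity, which also covers the degenerate case of an ear that is a single arc with both endpoints in $H_{i-1}$.

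\emph{Necessity.} Let $D$ be strongly connected with $|V(D)|\ge 2$. Pick an arc $(u,v)$; one exists because $|V(D)|\ge 2$ and $D$ is strongly connected. Strong connectivity gives a shortest $vu$-path, and together with $(u,v)$ it forms a directed cycle $P_1$. Now grow $H=P_1\cup\dots\cup P_{i-1}$ greedily while maintaining that $H$ is a subdigraph of $D$.

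If some arc $(a,b)\in A(D)$ with $a,b\in V(H)$ is missing from $H$, add it as a one-arc ear. Otherwise $V(H)\neq V(D)$, or we are done. In that case strong connectivity gives an arc $(a,x)$ with $a\in V(H)$ and $x\notin V(H)$. Take a shortest path $Q$ from $x$ back to $V(H)$, ending at its first vertex $b\in V(H)$; its internal vertices all lie outside $H$. Then $(a,x)$ followed by $Q$ is a directed ear: a path if $a\neq b$, and a cycle meeting $H$ only in $a$ if $a=b$.

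Each step strictly enlarges $H$, so the process terminates with $H=D$.

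The only delicate point is the ear definition when $a=b$. It must be a cycle meeting $H$ in exactly one vertex, and this holds because we take $Q$ to stop at its \emph{first} vertex in $V(H)$. We also need that no ear repeats vertices, which is guaranteed because $Q$ is a shortest path and $x\notin V(H)$.
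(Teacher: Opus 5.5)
Your proof is correct. Sufficiency goes by induction on the number of ears: adding an ear to a strong subgraph keeps it strong. Necessity goes by greedily growing $H$ from a cycle, using either a missing arc inside $V(H)$ as a one-arc ear, or an arc leaving $V(H)$ followed by a shortest path back to $V(H)$.

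The paper gives no proof of this lemma; it cites Bang-Jensen and Gutin, and your argument is the standard one for that result.
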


\subsection{\texorpdfstring{$O^*(2^n)$}{O-star-2-to-n}-time algorithm}

In this section we prove Theorem~\ref{thm:scss-exact}. Let $(D=(V,A),T,t)$ be an instance of SCSS, and let $n:=|V|$.
If $|T|\le 1$, then the instance is trivially a \yes-instance if and only if $t \ge 0$, so assume $|T|\ge 2$.
By Lemma~\ref{lem:EDinStronglyConnected}, every strongly connected subgraph admits an ear decomposition, and conversely every ear decomposition yields a strongly connected subgraph.

  \subparagraph*{Precomputation.}
  For every ordered pair $(s,t)\in V\times V$, define a table
  \[
    F_{s,t}\colon 2^V \to \NN \cup \{+\infty\}
  \]
  by setting, for every $X\subseteq V$,
  \[
    F_{s,t}[X] :=
    \begin{cases}
      |X|+1, & \text{if $s,t\notin X$ and there exists a simple directed $(s,t)$-path $P$ in $D$}\\
             & \text{such that $V^{\circ}(P)=X$,}\\
      +\infty, & \text{otherwise.}
    \end{cases}
  \]
  If $s=t$, we interpret such a path as a directed cycle through $s$ whose internal vertex set is $X$.
  By a standard subset dynamic program for directed paths with prescribed vertex set, all tables $F_{s,t}$ can be computed in $O^*(2^n)$ time and space.

  \subparagraph*{Dynamic Programming.}
  For every $q\in \{1,\dots,n-1\}$, we define a table
  $T_q\colon 2^V \to \NN \cup \{+\infty\}$
  where
  \[
    T_q[X] :=
    \min\left\{\,
      |F|
      \;\middle|\;
      \begin{array}{l}
        F\subseteq A(D[X]) \text{ and } (X,F) \text{ admits an ear decomposition} \\
        \text{with exactly $q$ ears}
      \end{array}
    \right\}.
  \]
  If no such subgraph exists, we set $T_q[X]:=+\infty$.
  
  The initialization corresponds to a single directed cycle:
  \[
    T_1[X] = \min_{a\in X} F_{a,a}[X\setminus\{a\}]
    \qquad \text{for every } X\subseteq V.
  \]
  For every $q\in\{2,\dots,n-1\}$ and every ordered pair $(s,t)\in V\times V$, define the masked table
  \[
    S_{q-1,s,t}[X] :=
    \begin{cases}
      T_{q-1}[X], & \text{if } \{s,t\}\subseteq X,\\
      +\infty, & \text{otherwise.}
    \end{cases}
  \]
  We then compute $U_{q,s,t} := S_{q-1,s,t} * F_{s,t}$, where $*$ denotes the $(\min,+)$ subset convolution. Equivalently,
  \[
    U_{q,s,t}[X]
    =
    \min_{X=A\sqcup B} \bigl(S_{q-1,s,t}[A] + F_{s,t}[B]\bigr).
  \]
  Finally, we set
  \[
    T_q[X] = \min_{(s,t)\in V\times V} U_{q,s,t}[X]
    \qquad \text{for every } X\subseteq V.
  \]
  The optimum value of the \textsc{SCSS} instance is
  \[
    \min_{q\in\{1,\dots,n-1\}} \ \min_{T\subseteq X\subseteq V} T_q[X].
  \]

  \subparagraph*{Correctness.}
  We prove by induction on $q$ that $T_q[X]$ equals the minimum number of arcs of a subgraph of $D[X]$ that admits an ear decomposition with exactly $q$ ears.
  For $q=1$, this is exactly the initialization, since a one-ear decomposition is precisely a directed cycle.

  For $q\ge 2$, consider an ear decomposition with exactly $q$ ears on vertex set $X$, and let its last ear be a directed ear from $s$ to $t$ with internal vertex set $B$.
  Let $A:=X\setminus B$.
  Then the first $q-1$ ears form an ear decomposition on $A$, and necessarily $\{s,t\}\subseteq A$.
  Hence the cost of the first $q-1$ ears is captured by $S_{q-1,s,t}[A]$, while the last ear is captured by $F_{s,t}[B]$.
  Therefore the total cost appears in $U_{q,s,t}[X]$.

  Conversely, if $X=A\sqcup B$, $S_{q-1,s,t}[A]<+\infty$, and $F_{s,t}[B]<+\infty$, then the subgraph witnessing $S_{q-1,s,t}[A]$ admits an ear decomposition with $q-1$ ears and contains both $s$ and $t$, and the path witnessing $F_{s,t}[B]$ can be appended as the last ear.
  This yields an ear decomposition with exactly $q$ ears on $X$.
  Taking the minimum over all ordered pairs $(s,t)$ proves the recurrence.%\jn{Use lemma 3.6}

  \subparagraph*{Running Time.}
  There are $n^2$ tables $F_{s,t}$, and all of them can be computed in $O^*(2^n)$ time and space.
  For each $q\in\{2,\dots,n-1\}$ and each ordered pair $(s,t)$, one $(\min,+)$ subset convolution over universe $V$ is computed in $O^*(2^n)$ time and space.
  Since the number of values of $q$ and ordered pairs $(s,t)$ is polynomial in $n$, the overall running time is $O^*(2^n)$. This concludes the proof of Theorem~\ref{thm:scss-exact}.

\section{Parameterization by Vertex Cover}\label{sec:ker}
In this section we study kernelization algorithms for these problems parameterized by the size of vertex cover, say $k$. In particular, in Section~\ref{sec:SCSSKernel} we show 
that \textsc{SCSS} (and consequently $2$-ECSS) admits a $O(k^2)$ kernel. Then in Section~\ref{sec:SteinerSCSKernel} we prove that \textsc{SCSS} has no polynomial compression.
The obtained Theorems~\ref{thm:KernelizationSTeinerCSS} and~\ref{thm:NoKernel} together imply Theorem~\ref{thm:ker}.

\subsection{Kernel for Strongly Connected Spanning Subgraph}\label{sec:SCSSKernel}

We use the following reformulation of the $\ell=1$ case of the expansion lemma.
\begin{lemma}[Lemma 3.2 of~\cite{FLLSTZ19}]\label{lem:scss-new-expansion}
    Let $G$ be a bipartite graph with bipartition $(A,B)$. Then there exist sets $A_1\subseteq A$ and $B_1\subseteq B$ such that:
    \begin{itemize}
        \item there is a matching $M$ saturating $A_1$ into $B_1$,
        \item $N_G(B_1)\subseteq A_1$, and
        \item $|B\setminus B_1|\le |A\setminus A_1|$.
    \end{itemize}
    Moreover, such sets $A_1$ and $B_1$ can be found in polynomial time.
\end{lemma}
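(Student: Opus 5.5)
We will prove the statement via a maximum matching and the alternating-path structure from K\H{o}nig's theorem. First compute a maximum matching $M$ of $G$ in polynomial time, for instance with Hopcroft--Karp. Let $U\subseteq B$ be the set of vertices of $B$ left unmatched by $M$. Define $Z$ as the set of vertices reachable from $U$ by $M$-alternating paths. Such a path starts in $U$, goes from $B$ to $A$ along edges not in $M$, and goes from $A$ back to $B$ along edges of $M$. We set $A_1:=Z\cap A$ and $B_1:=Z\cap B$. These sets are obtained by a single BFS, so the construction runs in polynomial time.

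\textbf{Neighbourhood condition.} We check $N_G(B_1)\subseteq A_1$. Let $b\in B_1$ and let $a$ be a neighbour of $b$. If $b\in U$, the edge $ba$ is not in $M$, so $a$ is reached directly and $a\in Z$. If $b\notin U$, then $b$ was reached through its matching edge from its partner $a'\in A_1$. If $a=a'$ we are done. Otherwise $ba\notin M$, so the alternating path extends to $a$ and again $a\in Z$.

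\textbf{Matching condition.} Every $a\in A_1$ is matched by $M$. Otherwise the alternating path from $U$ to $a$ would start and end at unmatched vertices, so it would be $M$-augmenting, contradicting the maximality of $M$. Moreover, the $M$-partner of each $a\in A_1$ lies in $B_1$, because the matching edge extends the alternating path. Hence $M$ restricted to $A_1$ is a matching saturating $A_1$ into $B_1$.

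\textbf{Counting condition.} Every $b\in B\setminus B_1$ is matched, since $U\subseteq Z\cap B=B_1$. Its partner $a$ cannot lie in $A_1$, since then $b$ would be reachable and hence in $B_1$. Therefore $M$ maps $B\setminus B_1$ injectively into $A\setminus A_1$, which gives $|B\setminus B_1|\le|A\setminus A_1|$.

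The only delicate step is showing that every vertex of $A_1$ is matched, and this is exactly where the maximality of $M$ (no augmenting path) is used. The remaining checks are direct case analyses of the alternating-path reachability.
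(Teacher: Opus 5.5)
Your proof is correct. It is the standard K\H{o}nig alternating-path construction, and you use the maximality of $M$ exactly where it is needed: an unmatched vertex of $A_1$ would end an $M$-augmenting path starting in $U$. One small point is harmless: an extended alternating walk might revisit a vertex, but that vertex is then already in $Z$.

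The paper itself gives no proof of this statement; it imports it as the $\ell=1$ case of the expansion lemma of \cite{FLLSTZ19}. So your argument is a self-contained replacement for the citation, not an alternative to an argument in the paper.

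It also shows more structure than the statement asks for:
\begin{itemize}
\item the same maximum matching $M$ saturates $A_1$ into $B_1$ and $B\setminus B_1$ into $A\setminus A_1$;
\item $A_1=N_G(B_1)$;
\item hence $A_1\cup(B\setminus B_1)$ is a minimum vertex cover of $G$.
\end{itemize}

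The identity $A_1=N_G(B_1)$ also lets your construction handle general $\ell$. Apply it to the graph in which every vertex of $A$ is replaced by $\ell$ false twins. Then $A_1$ is a union of whole twin classes, the matching becomes an $\ell$-expansion, and your counting step gives $|B\setminus B_1|\le \ell\,|A\setminus A_1|$. The kernel in this paper only needs $\ell=1$, though.
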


\begin{theorem}\label{thm:KernelizationSTeinerCSS}
    SCSS parameterized by vertex cover admits a kernel of $O(k^2)$ vertices, where $k$ is the size of the minimum vertex cover.
\end{theorem}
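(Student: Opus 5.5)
Throughout this plan I treat every vertex outside the vertex cover as one that must be spanned by the solution, i.e.\ every such vertex is a terminal. This is the setting of this subsection, since the exchange step below needs every independent vertex to lie in the solution. I will refer to the problem as SCSS, as the statement does.

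\textbf{Setup.} First I compute a $2$-approximate vertex cover $C$ of the underlying undirected graph, so $|C|\le 2k$. I write $I:=V\setminus C$, which is an independent set. Every $w\in I$ has all its neighbours in $C$ and must be spanned. Hence a solution contains, for each $w\in I$, at least one arc $(u,w)$ and at least one arc $(w,v)$ with $u,v\in C$. In particular $t\ge 2|I|$ in any \yes-instance. If some $w\in I$ lacks an in-neighbour or an out-neighbour, I output a trivial \no-instance.

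\textbf{Applying the expansion lemma.} I build an auxiliary bipartite graph $G$ with sides $A:=C\times C$ (ordered pairs) and $B:=I$. I join $(u,v)\in A$ to $w\in B$ whenever $(u,w),(w,v)\in A(D)$. By the previous step every $w$ has at least one neighbour in $G$. I apply Lemma~\ref{lem:scss-new-expansion} to obtain $A_1\subseteq A$, $B_1\subseteq B$ and a matching $M$ saturating $A_1$ into $B_1$, with $N_G(B_1)\subseteq A_1$ and $|B\setminus B_1|\le |A\setminus A_1|\le 4k^2$.

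The reduction rule is as follows. If some $w\in B_1$ is not covered by $M$, delete $w$ and decrease $t$ by $2$. Repeating this exhaustively leaves $|B_1|=|A_1|$, so $|I|\le |A|=O(k^2)$ and $|V|\le 2k+4k^2$, which is the claimed bound.

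\textbf{Safeness.} The forward direction is easy. Given a solution of $D-w$, add $(u,w)$ and $(w,v)$ for any pair $(u,v)\in N_G(w)$. This gives a strongly connected subgraph with two more arcs.

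The backward direction is the main obstacle. Take a minimum solution $X$ in $D$. The vertex $w$ has in-degree and out-degree at least $1$ in $X$. I must produce a solution of $D-w$ with at most $|X|-2$ arcs.

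I would first argue, by an exchange argument, that a minimum solution may be assumed to give every vertex of $I$ in- and out-degree exactly $1$ in $X$. When $w$ has higher degree, I would try to replace its extra arcs using the matched representatives $M(u',v')$ of the pairs $(u',v')\in N_G(w)\subseteq A_1$, since each such representative already offers a route $u'\to M(u',v')\to v'$.

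With $w$ reduced to the single detour $u\to w\to v$, deleting $w$ breaks only the connection from $u$ to $v$ through $w$. The matched vertex $w':=M(u,v)$ satisfies $w'\neq w$, is spanned by $X$, and admits the same detour. So I would reroute $w'$, or shift the whole chain of matched representatives along an $M$-alternating structure, so that some vertex takes over the role $u\to\cdot\to v$ without increasing the cost.

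Making this exchange rigorous is the step I expect to be hardest. The tool I would try first is a counting or alternating-path argument inside $G[A_1\cup B_1]$. The guiding fact is that $B_1$ has strictly more vertices than $M$ saturates, while all their demands lie in $A_1$. This should let me find a solution in which some unmatched vertex of $B_1$ is a pure degree-$(1,1)$ detour duplicated elsewhere in the solution, and that vertex can then be deleted at a saving of exactly two arcs.
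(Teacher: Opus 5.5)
\textbf{There is a genuine gap: the backward direction of safeness, which is the heart of the proof, is left open.}

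Your setup, the bipartite graph on $(C\times C, I)$, the use of Lemma~\ref{lem:scss-new-expansion}, the vertex bound and the forward direction all agree with the paper. Using a $2$-approximate cover is a harmless refinement. The forward direction needs $u,v$ to be in the solution of $D-w$. This is guaranteed because in the spanning problem the cover vertices must be spanned too, not only those of $I$.

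The route you sketch for the backward direction does not work as stated. The normalization ``some minimum solution gives every vertex of $I$ in- and out-degree exactly $1$'' is false. Take cover $\{a,b,c\}$, $I=\{w\}$ and arcs $a\to w$, $b\to w$, $w\to c$, $c\to a$, $c\to b$. This digraph is strongly connected and every arc is necessary, so $w$ has in-degree $2$ in the unique solution. For vertices of $B_1$ the normalization does hold. However, your local rerouting comes with no argument for termination or cost: moving $M(u,v)$ onto the detour $u\to\cdot\to v$ breaks whatever role $M(u,v)$ played, and the alternating chain continues from there.

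The missing idea is to rebuild the solution on all of $B_1$ simultaneously rather than exchanging one vertex at a time. Let $R:=B_1\setminus V(M)$, delete all of $R$ at once, and decrease $t$ by $2|R|$. Given a solution $X$ of $D$:
\begin{itemize}
\item remove every vertex of $B_1$ together with its incident arcs;
\item reinsert only the matched vertices, giving each $M(u,v)$ with $(u,v)\in A_1$ exactly the arcs $(u,M(u,v))$ and $(M(u,v),v)$.
\end{itemize}
Since $B_1\subseteq I$ is independent and each of its vertices has in- and out-degree at least $1$ in $X$, at least $2|B_1|$ arcs are removed. Exactly $2|A_1|$ arcs are added, so the result has at most $|X|-2|R|$ arcs.

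The result is strongly connected. Let $U:=C\cup(I\setminus B_1)$ and take any path of $X$ between two vertices of $U$. Each visit to some $z\in B_1$ is a segment $a\to z\to b$ with $a,b\in C$. Hence $(a,b)\in N_G(z)\subseteq A_1$, and the segment can be replaced by $a\to M(a,b)\to b$. So $U$ stays mutually reachable, and every $M(a,b)$ is attached to $U$ in both directions through $a$ and $b$.

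This is exactly the counting you anticipated: $|B_1|>|A_1|$ with all demands inside $A_1$. It only goes through globally, because after the rebuild every pair in $A_1$ has its own dedicated representative. All detours that $X$ used inside $B_1$ are then available at the same time.
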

\begin{proof}
    Consider an instance $(D=(V,A),t)$ of SCSS, and let $S$ be a vertex cover of the underlying graph of $D$ of size at most $k$. Set $I:=V\setminus S$. Since $S$ is a vertex cover, $I$ is independent. Let $T:=S\times S$, and define a bipartite graph $B$ with bipartition $(T,I)$ by putting an edge between $(u,w)\in T$ and $v\in I$ if and only if $(u,v)$ and $(v,w)$ are arcs of $D$.

    By Lemma~\ref{lem:scss-new-expansion}, there exist sets $T_1\subseteq T$ and $I_1\subseteq I$ satisfying the three properties listed above.
    We use the following reduction rule.

    \subparagraph*{Reduction Rule.}
    Let $R:=I_1\setminus V(M)$. Replace $(D,t)$ by $(D-R,t-2|R|)$.
    \begin{claim}
        $(D,t)$ is a \yes-instance if and only if $(D-R,t-2|R|)$ is a \yes-instance.
    \end{claim}
    \begin{proof}
        Let $(D',t')$ be the reduced instance, where $D':=D-R$ and $t':=t-2|R|$.

        Assume first that $(D',t')$ is a \yes-instance, and let $H'$ be a strongly connected spanning subgraph of $D'$ with at most $t'$ arcs. If some vertex $v\in I$ has no neighbor in $B$, then $(D,t)$ is trivially a \no-instance, since in any strongly connected spanning subgraph the vertex $v$ must have an in-neighbor and an out-neighbor, both in $S$ because $I$ is independent. Hence we may assume that every vertex of $I$, and in particular every vertex of $R$, has a neighbor in $B$. For every $v\in R$, choose an arbitrary neighbor $(u,w)\in N_B(v)$ and add the arcs $(u,v)$ and $(v,w)$ to $H'$. Let $H$ be the resulting spanning subgraph of $D$. Since we add exactly two arcs for each vertex of $R$, we obtain
        \(
            |A(H)|=|A(H')|+2|R|\le t.
        \)
        Moreover, $H$ is strongly connected. Indeed, $H'$ is strongly connected on $V(D')$, and for every $v\in R$ the vertices $u,w$ belong to $S\subseteq V(D')$. Hence every old vertex reaches $u$ in $H'$ and therefore reaches $v$ in $H$, while $v$ reaches $w$ and therefore reaches every old vertex. It follows that every vertex of $D$ can reach $v$ and can be reached from $v$, so $H$ is strongly connected.

        Conversely, assume that $(D,t)$ is a \yes-instance, and let $H$ be a strongly connected spanning subgraph of $D$ with at most $t$ arcs. For every $\tau=(u,w)\in T_1$, let $m_\tau$ be the unique vertex of $I_1$ matched to $\tau$ by $M$. We construct a subgraph $H^*$ of $D'$ as follows. We delete all vertices of $I_1$ from $H$ and arcs incident to them,
        % keep every arc of $H$ whose endpoints both lie outside $I_1$, 
        add back exactly the matched vertices $m_\tau$ for $\tau\in T_1$, and for each $\tau=(u,w)\in T_1$ add the arcs $(u,m_\tau)$ and $(m_\tau,w)$.

        We first bound the number of arcs of $H^*$. Since $I$ is independent, $H$ has no arc with both endpoints in $I$. As $H$ is strongly connected, every vertex of $I_1$ has indegree at least one and outdegree at least one in $H$. Therefore $H$ contains at least $2|I_1|$ arcs incident with $I_1$. On the other hand, $H^*$ contains exactly $2|T_1|$ arcs incident with the vertices $m_\tau$, one entering and one leaving each matched vertex. Hence
        \(
            |A(H^*)|
            \le |A(H)|-2|I_1|+2|T_1|
            \le t-2|R|.
        \)

        It remains to prove that $H^*$ is strongly connected. Let
        \(
            U:=S\cup (I\setminus I_1).
        \)
        Consider any directed path $P$ in $H$ whose endpoints lie in $U$. Whenever $P$ contains a segment $a\to z\to b$ with $z\in I_1$, we have $a,b\in S$ because $I$ is independent. Hence $(a,b)\in N_B(z)\subseteq T_1$, and therefore the matched vertex $m_{(a,b)}$ is defined and the arcs $(a,m_{(a,b)})$ and $(m_{(a,b)},b)$ belong to $H^*$. Replacing every such segment $a\to z\to b$ by $a\to m_{(a,b)}\to b$ turns $P$ into a directed walk in $H^*$ with the same endpoints. Since $H$ is strongly connected, every two vertices of $U$ are joined by directed paths in both directions in $H$, and thus every two vertices of $U$ are mutually reachable in $H^*$.

        Finally, let $\tau=(u,w)\in T_1$. The vertex $m_\tau$ is incident in $H^*$ with the arcs $(u,m_\tau)$ and $(m_\tau,w)$, where $u,w\in S\subseteq U$. Since the vertices of $U$ are mutually reachable in $H^*$, 
        $m_\tau$ can reach and be reached by every vertex of $U$ (via $u$ and $w$, respectively).
        % every vertex of $U$ reaches $m_\tau$ via $u$, and $m_\tau$ reaches every vertex of $U$ via $w$. 
        Therefore every matched $m_\tau$ shares the same strongly connected component as $U$, and hence $H^*$ is strongly connected.
    \end{proof}

    To obtain the kernel of $O(k^2)$ vertices we do the following: if $|I|>|T|$, then we apply the reduction rule once. Let $(\widehat{D},\widehat{t})$ be the resulting instance, and let $\widehat{I}:=I\setminus R$. Since $M$ saturates $T_1$, we have $|V(M)|=|T_1|$, and by property (iii),
    \[
        |\widehat{I}|=|I\setminus I_1|+|V(M)|\le |T\setminus T_1|+|T_1|=|T|.
    \]
    Thus, after one application of the reduction rule, we may assume that $|I|\le |T|$. Since $T=S\times S$, it follows that
    \[
        |V(D)|=|S|+|I|\le |S|+|T|=|S|+|S|^2\le k+k^2=O(k^2),
    \]
    which proves the theorem.
\end{proof}

\subsection{No Kernel for SCSS}\label{sec:SteinerSCSKernel}
Our goal is to show that \textsc{SCSS}, parameterized by the size of a vertex cover, does not admit a polynomial kernel. In fact, we prove a stronger statement: it does not even admit a polynomial compression.

\begin{definition}[Polynomial compression, Definition 1.5 of~\cite{FLSkernelizationGreen2019}]
    Let $Q \subseteq \Sigma^* \times \mathbb{N}$ be a parameterized problem. A \emph{polynomial compression} for $Q$ is a polynomial-time algorithm that, given an instance $(x,k)$, outputs a string $y \in \Sigma^*$ such that $(x,k) \in Q \iff y \in L$ for some language $L \subseteq \Sigma^*$, and $|y| \leq p(k)$, for some polynomial $p$.
\end{definition}

In other words, a polynomial compression reduces an instance of a parameterized problem to an equivalent instance of \emph{some} problem whose size is bounded polynomially in the parameter. Since a polynomial kernel is a special case of a polynomial compression, ruling out polynomial compressions also rules out polynomial kernels.

To transfer compression lower bounds, we use polynomial parameter transformations.

\begin{definition}[Polynomial parameter transformation (PPT),  Definition 19.1 of~\cite{FLSkernelizationGreen2019}]
    Let $Q, Q' \subseteq \Sigma^* \times \mathbb{N}$ be parameterized problems. A \emph{polynomial parameter transformation} from $Q$ to $Q'$ is a polynomial-time algorithm that maps any instance $(x,k)$ of $Q$ to an instance $(x',k')$ of $Q'$ such that
    \[
    (x,k) \in Q \iff (x',k') \in Q' \text{ and } k' \leq p(k), \text{ for some polynomial $p$.}
    \]
\end{definition}

The importance of PPT reductions lies in their ability to propagate lower bounds.

\begin{lemma}[Theorem 19.2 of~\cite{FLSkernelizationGreen2019}]
    Let $Q$ and $Q'$ be parameterized problems such that $Q \leq_{\mathrm{ppt}} Q'$. If $Q$ does not admit a polynomial compression unless $\mathrm{NP} \subseteq \mathrm{coNP}/\mathrm{poly}$, then neither does $Q'$.
\end{lemma}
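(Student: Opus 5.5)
We argue by contraposition. Assume $Q \leq_{\mathrm{ppt}} Q'$ and that $Q'$ admits a polynomial compression. We will build a polynomial compression for $Q$ by composing the transformation with the compression. Then, if the hypothesis on $Q$ holds, a compression of $Q'$ would yield one for $Q$ and hence force $\mathrm{NP} \subseteq \mathrm{coNP}/\mathrm{poly}$. This is exactly the claim.

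\textbf{Setup.} Let $\mathcal{A}$ be the polynomial parameter transformation. It runs in polynomial time and maps $(x,k)$ to $(x',k')$ with $(x,k)\in Q \iff (x',k')\in Q'$ and $k'\le p(k)$. Let $\mathcal{B}$ be the compression for $Q'$. It maps $(x',k')$ to a string $y$ with $(x',k')\in Q' \iff y\in L$ for a fixed language $L$, and $|y|\le p'(k')$. We may assume $p$ and $p'$ are nondecreasing, for instance by replacing each by the polynomial whose coefficients are the absolute values of the original ones.

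\textbf{Composition.} On input $(x,k)$, the compression for $Q$ first runs $\mathcal{A}$ and then runs $\mathcal{B}$ on the result.

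\textbf{Running time.} Since $\mathcal{A}$ runs in polynomial time, $|x'|$ and $k'$ are bounded by a polynomial in $|(x,k)|$. Running $\mathcal{B}$ then takes time polynomial in $|(x',k')|$, which is polynomial in the original input size.

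\textbf{Correctness.} Chaining the two equivalences gives
\[
(x,k)\in Q \iff (x',k')\in Q' \iff y\in L .
\]
The target language $L$ is the one fixed by $\mathcal{B}$. The definition of compression only requires \emph{some} language, so this is allowed.

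\textbf{Size bound.} By monotonicity, $|y|\le p'(k')\le p'(p(k))$, and $p'\circ p$ is a polynomial.

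Hence $Q$ admits a polynomial compression, completing the contrapositive. The only point needing care is the size bound: it requires $p'$ to be monotone, which the coefficient-replacement trick above guarantees.
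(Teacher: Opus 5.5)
Your proposal is correct and is the standard argument. The paper gives no proof of its own and only cites Theorem 19.2 of the kernelization book, whose proof is exactly this: compose the PPT with a polynomial compression for $Q'$ to obtain a compression of size $p'(p(k))$ for $Q$, then apply the hypothesis on $Q$ (your contrapositive framing). Your monotonicity fix for $p'$ is a small but correct detail.
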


Therefore, to prove that \textsc{SCSS} parameterized by vertex cover does not admit a polynomial kernel, we show that there is a PPT reduction from \textsc{Set Cover} parameterized by the universe size. In order to prove this, we make use  of the fact that \textsc{Set Cover} parameterized by the universe size does not have a polynomial compression~(See \cite{LPRS17} or Theorem 19.5 in \cite{FLSkernelizationGreen2019}).

\begin{theorem}\label{thm:NoKernel}
    \textsc{SCSS} parameterized by vertex cover does not admit a polynomial compression unless $\mathrm{NP} \subseteq \mathrm{coNP}/\mathrm{poly}$.
\end{theorem}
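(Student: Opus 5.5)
We give a polynomial parameter transformation from \textsc{Set Cover} parameterized by the universe size $|U|$ to \textsc{SCSS} parameterized by vertex cover, and then invoke the PPT lemma above together with the known incompressibility of \textsc{Set Cover}~\cite{LPRS17}.

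\textbf{Construction.} Let the \textsc{Set Cover} instance consist of a universe $U=\{u_1,\dots,u_m\}$, a family $\mathcal{F}=\{F_1,\dots,F_p\}$, and a budget $\ell$. We build a digraph $D$ as follows.
\begin{itemize}
\item There is a root vertex $r$ and one vertex $u$ for each element $u\in U$.
\item For each set $F_j$ there is a vertex $f_j$. We add the arc $(r,f_j)$ and, for every $u\in F_j$, the arc $(f_j,u)$.
\item For every $u\in U$ we add the arc $(u,r)$.
\item The terminals are $T=\{r\}\cup U$, and the budget is $t=\ell+\sum$-adjusted, determined below.
\end{itemize}
The set $S=\{r\}\cup U$ is a vertex cover: every arc has an endpoint in $S$, and the $f_j$ form an independent set. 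Hence the parameter is $m+1$, which is polynomial in $|U|$. This is the key point: the set vertices, which are unboundedly many, form the independent side.

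\textbf{Counting arcs.} In any solution, each terminal $u$ must be reached from $r$. Its only in-neighbours are set vertices $f_j$ with $u\in F_j$, and each $f_j$ is entered only from $r$.

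Suppose a solution uses the set vertices $f_j$ for $j\in J$. Then it needs
\begin{itemize}
\item at least $|J|$ arcs $(r,f_j)$,
\item at least $m$ arcs entering the vertices of $U$, at least one per element,
\item the $m$ arcs $(u,r)$, which are forced because they are the only out-arcs of each $u$.
\end{itemize}
Conversely, given a cover $J$, choose $(r,f_j)$ for $j\in J$, one arc $(f_j,u)$ per element $u$ with $j\in J$ and $u\in F_j$, and all arcs $(u,r)$. This subgraph is strongly connected provided each chosen $f_j$ can return to $r$. That holds if every chosen $f_j$ covers at least one element through a chosen arc, which we may assume by dropping any unused sets from the cover.

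Setting $t=\ell+2m$ therefore gives the equivalence. The backward direction: the arc-count bound forces $|J|\le \ell$, and reachability of each $u$ forces $J$ to be a cover.

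\textbf{Main obstacle.} The step requiring care is the exact arc count, in particular ensuring that no solution benefits from using extra arcs $(f_j,u)$. Such arcs only increase the cost, so minimality removes them. We must also check that no alternative routes exist, which holds since the only arcs are those listed.

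With this, Theorem~19.5 of~\cite{FLSkernelizationGreen2019} combined with the PPT lemma yields the theorem.
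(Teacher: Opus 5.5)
Your proof is correct and follows the paper's route. It is the same PPT from \textsc{Set Cover} parameterized by universe size. A hub points to the set vertices, set vertices point to the elements they contain, and elements point back to the hub, so the hub plus the element vertices form a vertex cover of size $|U|+1$. The only difference is cosmetic: the paper uses two non-terminal hubs $s,t$ joined by the arc $(t,s)$ with budget $k+2n+1$, while you merge them into a single terminal root $r$ with budget $\ell+2m$. One small precision: in the backward direction, define $J:=\{j : (r,f_j)\in X\}$, so that the bound $|X|\ge |J|+2m$ and the covering property both hold literally.
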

\begin{proof}
    We provide a PPT transformation from \textsc{Set Cover} parameterized by the universe size to \textsc{SCSS} parameterized by vertex cover.
    Consider an instance $\mathcal I$ of \textsc{Set Cover} on universe $U=\{1,2,\ldots,n\}$, and family of sets $F=\{S_1,\ldots,S_m\}$. We create an instance $\mathcal I'$ of \textsc{SCSS} on graph $D$ with $m+n+2$ vertices, that has a vertex cover of size $n+1$. For this purpose, we define $V(D)=\{s,t\}\cup V_U\cup V_F$, where $V_U=\{u_1,\ldots,u_n\}$ (one vertex for each element) and $V_F=\{v_1,\ldots,v_m\}$ (one vertex for each of the sets). Now, $A(D)$ the set of arcs of $D$ are defined as follows:
    \begin{itemize}
        \item for every $i\in [m]$, $(s,v_i) \in A(D)$,
        \item for every $i \in [m]$ and $j \in [n]$ such that $j\in S_i$, $(v_i,u_j) \in A(D)$,
        \item for every $j\in [n]$, $(u_j,t)\in A(D)$,
        \item and $(t,s)\in A(D)$.
    \end{itemize}

    Finally, we define the set of terminals as $V_U$ and all the other vertices are Steiner nodes.
    Note that the size of the vertex cover of $D$ is polynomial in $n$, as $\{s\} \cup V_U$ is a vertex cover of size $n+1$ for $D$. 
    
     We now show that, given a feasible solution $F'\subseteq F$ for $\mathcal{I}$ of size $k$, one can construct in polynomial time a feasible solution of size $k + 2n + 1$ for $\mathcal{I}'$. For every $i\in U$, let $p_i$ be the smallest number such that $S_{p_i}\in F'$ and $i\in S_{p_i}$. We can construct a solution for $\mathcal{I'}$ by taking all the arcs $(v_{p_i},u_i)$, all the edges that are incident to $t$, and all the edges $(s,v_j)$ such that $S_j\in F'$. Note that 
    the size of this solution is $n+(n+1)+|F'|$ and it is feasible as any terminal can reach $s$ and also can be reached from $s$. 
    
    We also show that given a feasible solution $A'\subseteq A(D)$ for \(\mathcal{I'}\) of size $k$, we can find a solution of size at most \(k-2n-1\) for \textsc{Set Cover}. Note that as $A'$ is feasible, it must contain all the edges incident on $t$ as it is the only out-neighbor of all the terminals. Moreover, in $A'$, $s$ must have a path to all the terminals and any path from $s$ to a terminal $u_i$ in $D$ is a path of length two whose middle point in $V_F$. Let $V' \subseteq V_F$ be the set of all middle points of these paths. Clearly, the number of edges that appear in all these paths are at least $|V'|+n$, as it contains all the arcs from $t$ to any vertex in $V'$ and also at least one arc going to any terminal $u_i$. Now the family of sets $F'\subseteq F$ that correspond to $V'$ covers $U$, which means it is a solution of size at most $k-2n-1$ for $\mathcal I$.
\end{proof}

\section{Future Research Directions}\label{sec:con}
In this paper we gave new algorithms for SCSS and related problems. We believe there are ample directions for further research and outline some of them now.

First of all, it is natural to investigate how well our method works for problems involving $p$-edge-connectivity, which can be approached with a similar characterization via in/out-branchings~\cite{DBLP:journals/algorithmica/AgrawalMPS22}. For example, how fast can the \textsc{$p$-ECSS} problem be solved parameterized by the treewidth (see~\cite{DBLP:journals/algorithmica/AgrawalMPS22} for a problem definition)?

Another natural question is to get "Strong ETH-tight" algorithms for \textsc{SCSS}. Something that has been achieved for many natural problems (see e.g.~\cite{LokshtanovMS18}).
\begin{question}\label{q:tight}
    Find a $O^*(c^{\tw})$ time algorithm for SCSS and a proof that a $O^*((c-\eps)^{\tw})$ time algorithm for any $\varepsilon >0$ refutes the Strong Exponential Time Hypothesis.    
\end{question}

A probably easier version of Question~\ref{q:tight} is obtained by replacing $\tw$ with the \emph{pathwidth} of the graph. To attack this question, we believe the following combinatorial question is central:

\begin{question}\label{q:com}
    Let $(V,A)$ be a complete directed graph and let $A_1,\ldots,A_\ell \subseteq A$ and $B_1,\ldots,B_\ell \subseteq A$ such that  $(V , A_i \cup B_j)$ is strongly-connected if and only if $i=j$. How large can $\ell$ be in terms of $|V|$?
\end{question}
The quantity $\ell$ above coincides with the \emph{largest induced matching} of the \emph{compatibility matrix} associated with the SCSS problem (see e.g.~\cite{Nederlof20} for definitions), and it is natural to conjecture a close connection of it with the complexity of SCSS parameterized by pathwidth:\begin{conjecture}
If $\ell = \theta(c^{|V|})$ in Question~\ref{q:com}, then \textsc{SCSS} can be solved in $(c+1)^{\pw}$ time and not in $(c+1-\epsilon)^{\pw}$, assuming SETH.
\end{conjecture}

Note that, in the above conjecture the $+1$ is because vertices can also not participate in the solution.

In the $2$VCSS problem one is given an undirected graph and seeks a subgraph with minimum number of edges that is $2$-vertex connected. We leave the following problem open:
\begin{question}\label{q:2vcss}
    Can the $2$-VCSS problem be solved in $O^*(c^{O(\tw)})$ time?
\end{question}

Finally, it is natural to further study the exact exponential time complexity of the problems studied in this paper.

\begin{question}\label{q:2ecssSubQuadratic}
    Can \textsc{$2$-ECSS} be solved in $O^*((2-\eps)^n)$ time, for some $\eps>0$?
\end{question}

Since $\textsc{$2$-ECSS}$ generalizes the \textsc{Hamiltonicity} problem on undirected graphs, such an algorithm probably needs randomization and algebraic techniques such as the ones from~\cite{Bjorklund14}.

\bibliographystyle{abbrv}
\bibliography{ref}

\appendix

\section{Reducing $2$-\textsc{ECSS} to \textsc{SCSpS}}\label{sec:red}
\begin{lemma}\label{lem:ecssred}
    An instance $(G=(V,E),t)$ of the $2$-\textsc{ECSS} problem can be reduced to an equivalent instance of the \textsc{SCSpS} problem $(D=(V,A),t)$ by adding for each edge $\{u,v\} \in E$ with the two arcs $(u,v)$ and $(v,u)$ to $A$.
\end{lemma}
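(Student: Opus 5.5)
The reduction keeps the vertex set and budget $t$ and replaces every undirected edge $\{u,v\}$ by the two arcs $(u,v)$ and $(v,u)$; it is clearly polynomial, so what must be shown is that $G$ has a $2$-edge-connected spanning subgraph with at most $t$ edges if and only if $D$ has a strongly connected spanning subgraph with at most $t$ arcs. Each direction will go through a strong orientation, using Robbins' theorem: a connected undirected graph has a strongly connected orientation if and only if it is bridgeless (for $|V|\ge 2$; the one-vertex case is trivial on both sides).

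\emph{From $2$-ECSS to SCSpS.} Let $X\subseteq E$ with $|X|\le t$ and $(V,X)$ $2$-edge-connected. By Robbins' theorem $(V,X)$ has a strongly connected orientation. This orientation chooses, for every edge of $X$, one of the two arcs that $D$ contains for it. The chosen arcs therefore form a strongly connected spanning subgraph of $D$ with exactly $|X|\le t$ arcs.

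\emph{From SCSpS to $2$-ECSS.} Let $Y\subseteq A$ with $|Y|\le t$ and $(V,Y)$ strongly connected, and let $X$ be the set of underlying edges of arcs in $Y$, so $|X|\le|Y|\le t$. Since $(V,Y)$ is strongly connected, $(V,X)$ is connected. To show it has no bridge, take any $e=\{u,v\}\in X$ and say $(u,v)\in Y$. Strong connectivity gives a directed $v$–$u$ path in $Y$. If this path avoids the arc $(u,v)$, its underlying walk is a $v$–$u$ connection in $X$ not using $e$ (it cannot use $e$ as the arc $(v,u)$ in the middle of a simple path that starts at $v$ and ends at $u$). Together with $e$ this gives a cycle through $e$, so $e$ is not a bridge. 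A simple $v$–$u$ path cannot use $(u,v)$ at all, since it visits $u$ only at its end.

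\emph{Main obstacle: the 2-cycle case.} The delicate point is when the $v$–$u$ path is the single arc $(v,u)$, so both orientations of $e$ lie in $Y$ and the underlying walk uses $e$ itself. In that case I would not pass to the underlying graph directly. Instead I count: if $Y$ contains both $(u,v)$ and $(v,u)$, then $X$ has strictly fewer edges than $|Y|$. I would then argue on a minimal $Y$. After deleting one arc of such a pair the digraph either remains strongly connected, in which case the pair was not needed, or it does not, which means $e$ is a bridge of $(V,X)$. A bridge cannot be avoided in any $2$-edge-connected graph, but bridges are allowed in the digraph setting via 2-cycles (e.g.\ a single edge $u$–$v$ with both arcs). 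So I expect this direction to fail as stated unless the paper's notion of $2$-edge-connected spanning subgraph permits using an edge ``twice'', or unless the equivalence is formulated for graphs where the optimum avoids 2-cycles. I would first check whether the paper intends a multigraph or ``doubled-edge'' variant. If it does not, I would restrict the claim to the direction $\mathrm{OPT}_{\textsc{SCSpS}}(D)\le\mathrm{OPT}_{2\textsc{-ECSS}}(G)$ via Robbins' theorem, and handle the converse by the bridge and 2-cycle analysis above.
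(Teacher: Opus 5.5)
Your forward direction, which orients a $2$-edge-connected solution strongly via Robbins' theorem, is exactly the paper's. You are also right that the bare doubling is not an equivalence when $G$ itself has a bridge. For example, a single edge $\{u,v\}$ with $t=2$ is a \no-instance of $2$-\textsc{ECSS} but a \yes-instance of \textsc{SCSpS}. The paper disposes of this by first checking, in polynomial time, whether $G$ is $2$-edge-connected, and outputting a trivial \no-instance if not.

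The genuine gap comes after that. Once $G$ is $2$-edge-connected, the converse direction holds, but your proposal stops at the $2$-cycle case and suggests the direction may fail. The dichotomy you state there is also not right as written. In $Y=\{(u,v),(v,u),(v,w),(w,u)\}$, deleting $(u,v)$ destroys strong connectivity, yet $\{u,v\}$ lies on the triangle $uvw$ and is not a bridge.

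The missing idea, which is how the paper argues, is an exchange argument. It uses the $2$-edge-connectivity of $G$, not of the solution.

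\begin{itemize}
\item Among all minimum strongly connected spanning subdigraphs $Y$ of $D$, pick one whose underlying graph $X$ has the fewest bridges.
\item Suppose $e=\{u,v\}$ is a bridge of $X$, and let $C_u\ni u$ and $C_v\ni v$ be the two components of $X-e$. Strong connectivity forces $(u,v),(v,u)\in Y$.
\item Each of $Y[C_u]$ and $Y[C_v]$ is strongly connected. A simple directed path between two vertices on one side cannot cross $e$ and come back without repeating $u$ or $v$.
\item Because $G$ is $2$-edge-connected, some edge $f=\{a,b\}\neq e$ of $G$ has $a\in C_u$ and $b\in C_v$. Also $f\notin X$, since $e$ is the only edge of $X$ across this cut.
\item Set $Y':=Y-(u,v)+(a,b)$. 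It has the same number of arcs as $Y$.
\item $Y'$ is strongly connected: it consists of the two strong sides joined by $(a,b)$ and $(v,u)$.
\item The underlying graph of $Y'$ is $X+f$. Adding an edge creates no new bridges, and $e$ now lies on a cycle through $f$, so $X+f$ has strictly fewer bridges than $X$.
\end{itemize}

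This contradicts the choice of $Y$. Hence $X$ is bridgeless and connected, i.e.\ a $2$-edge-connected spanning subgraph of $G$ with $|X|\le|Y|\le t$ edges. Your $2$-cycle analysis stops exactly where this swap is needed, and $G$'s $2$-edge-connectivity is what supplies the replacement edge $f$.
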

\begin{proof}
    If $G$ is not $2$-edge connected, then the instance is trivially a \no-instance, so assume that $G$ is $2$-edge connected. The forward direction follows from Robbins' theorem: every $2$-edge connected spanning subgraph admits a strongly connected orientation.
    For the converse direction, consider a minimum strongly connected spanning subdigraph of $D$ whose underlying undirected graph has the minimum number of bridges. If the underlying undirected graph has a bridge $e$, then both orientations of $e$ must be used. Since the original graph is $2$-edge connected, there is another edge crossing the same cut. Replacing one orientation of $e$ by a suitable orientation of that edge preserves strong connectivity and decreases the number of bridges, a contradiction. Hence the graph is bridgeless, that is, $2$-edge connected.
\end{proof}

\end{document}